\providecommand{\U}[1]{\protect\rule{.1in}{.1in}}
\newcommand{\quotes}[1]{``#1''}
\newtheorem{theorem}{Theorem}
\newtheorem{definition}[theorem]{Definition}
\newtheorem{problem}[theorem]{Problem}
\renewcommand{\O}{\mathcal{O}}
\tikzset{
	mystyle/.style={
		circle,
		align=center,
		draw=black,
		fill=cyan!30!white,
		minimum size=6mm,
		line width=0.4mm, 
	}
}
\begin{document}

\title{On the non-submodularity of the problem of adding links to minimize the effective graph resistance}
\author{Massimo A. Achterberg* \ and Robert E. Kooij*\textsuperscript{,$\dagger$}}
\date{* Faculty of Electrical Engineering, Mathematics and Computer Science, Delft University of Technology, P.O. Box 5031, 2600 GA Delft, The Netherlands \\
	\textsuperscript{$\dagger$} Unit ICT, Strategy \& Policy, TNO (Netherlands Organisation for Applied Scientific Research), P.O. Box 96800, 2509 JE, Den Haag, The Netherlands \\
	email: \{M.A.Achterberg, R.E.Kooij\}@tudelft.nl \\
	\today}
\maketitle

\begin{abstract}
We consider the optimisation problem of adding $k$ links to a given network, such that the resulting effective graph resistance is as small as possible. The problem was recently proven to be NP-hard, such that optimal solutions obtained with brute-force methods require exponentially many computation steps and thus are infeasible for any graph of realistic size. Therefore, it is common in such cases to use a simple greedy algorithm to obtain an approximation of the optimal solution. It is known that if the considered problem is submodular, the quality of the greedy solution can be guaranteed. However, it is known that the optimisation problem we are facing, is not submodular. For such cases one can use the notion of generalized submodularity, which is captured by the submodularity ratio $\gamma$. A performance bound, which is a function of $\gamma$, also exists in case of generalized submodularity. In this paper we give an example of a family of graphs where the submodularity ratio approaches zero, implying that the solution quality of the greedy algorithm cannot be guaranteed. Furthermore, we show that the greedy algorithm does not always yield the optimal solution and demonstrate that even for a small graph with 10 nodes, the ratio between the optimal and the greedy solution can be as small as~0.878.

\textbf{Keywords}: Effective graph resistance, Network augmentation, Generalised submodularity, Greedy algorithms, Submodularity ratio
\end{abstract}

\section{Introduction}
Many network metrics have been utilised to quantify the robustness of a network, see for instance~\cite{ps18,jose,fiedler,schneider,hale}. Freitas \emph{et al}.\ \cite{freitas} classify robustness metrics into three types: 
metrics based on structural properties, such as edge connectivity or diameter; metrics based on the spectrum of the adjacency matrix, such as the spectral radius or spectral gap; and metrics based on the spectrum of the Laplacian matrix, for instance the algebraic connectivity and the effective graph resistance.
The robustness is considered optimal for a complete network and minimal for disconnected networks. In this paper we consider the $k$-Graph Robustness Improvement Problem ($k$-GRIP) \cite{predari2022graphresistance}, in which one has to decide where $k$ links are to be added to a given network $G$, such that the robustness is optimised. The set of placeable positions is not necessarily all currently non-existing links -- there may be additional constraints. We call the set of placeable links $V$ and denote the robustness measure by~$f$. From here onward, we assume that the set of placeable links $V$ equals all non-existing links, unless otherwise specified. 

For many choices of the robustness metric $f$, $k$-GRIP is known to be NP-hard. To overcome the difficulty of finding the optimal solution (often only possible using a brute-force algorithm), we apply a simple greedy algorithm. Out of all placeable links~$V$, at each step the greedy algorithm selects a single link to add. This procedure is repeated until $k$ links are added. The greedy algorithm often performs well in practice, but the solution quality in general cannot be guaranteed.

The notion of submodularity was introduced by Nemhauser \emph{et al}.\ \cite{nemhauser1978subdmodularity} as a tool to guarantee the performance of the greedy algorithm. Submodularity implies that adding one element to a large set has relatively small influence, whereas it has a stronger influence on a small set. Nemhauser \emph{et al}.\ \cite{nemhauser1978subdmodularity} proved that if the robustness metric satisfies the submodularity condition, the corresponding optimisation problem can be solved with a polynomial-time greedy algorithm, whose performance is at least a factor $\left(1-\frac{1}{e}\right)$-close to the optimal solution, and moreover, there does not exist any better algorithm with the same complexity. A recent overview of submodularity is provided by \cite{clark2017submodularity}. The concept of submodularity was generalised by Das and Kempe \cite{daskempe2011submodularityratio} by introducing the submodularity ratio~$\gamma$ and was further generalised with the concept of  curvature~$\alpha$ in \cite{summers2019curvature,bian2017curvature} and for nonmonotone metrics \cite{santiago2020submodularity}.

In this paper we consider the effective graph resistance \cite{ellens2011graphresistance} as the robustness metric. The effective graph resistance not only covers the shortest path between any pair of nodes, but incorporates all paths between any two nodes. It has been shown in \cite{summers2017correction} that the problem of minimizing the effective graph resistance upon the addition of $k$ links is non-submodular. In this paper we give an example of a family of graphs where the submodularity ratio approaches zero, implying that, even by using the concept of generalized submodularity, the solution quality of the greedy algorithm cannot be guaranteed. 

The remainder of the paper is as follows. We first show related work in Section~\ref{sec_related_work}. Section~\ref{sec_background} formally introduces the $k$-GRIP optimisation problem. We proceed by providing a counterexample for generalised submodularity in Section~\ref{sec_counterexample} for $k$-GRIP with the effective graph resistance. We compare the greedy algorithm with the brute-force method for many small graphs in Section~\ref{sec_greedy} and show that the greedy algorithm does not always provide the optimal solution. Finally, we conclude in Section~\ref{sec_conclusion}.

\section{Related work}\label{sec_related_work}
Several researchers investigated $k$-GRIP for specific robustness metrics. For instance, \cite{wang2008algebraic} considered $1$-GRIP, with as robustness metric the algebraic connectivity, i.e.\ the second-smallest eigenvalue of the Laplacian matrix $L$. They suggest several strategies, based upon topological and spectral properties of the graph, to decide which single link to add to the network, in order to increase the algebraic connectivity as much as possible. The algebraic connectivity for $k$-GRIP was considered by \cite[Chapter 8]{ZhidongThesis}. Under some light conditions, lower bounds for the quality of the greedy solution were obtained. It might be argued that the algebraic connectivity is not a proper robustness metric, because there are examples where adding a link to a graph, does not change the algebraic connectivity, see \cite{natural_connectivity}. The NP-hardness of $k$-GRIP for the algebraic connectivity was proved in \cite{AC_NP}. A nice overview of $k$-GRIP for the algebraic connectivity is presented in \cite{li2018algebraicconnectivity}, see the references [5--16] therein. 

Shan \emph{et al}.\ \cite{shan2018noderesistance} considered the node resistance as robustness metric, which is the sum of the effective resistances from one source node $v$ to all other nodes. They assume $V$ is the set of non-existing links from the source node $v$; not all possible non-existing links. In that case, the node resistance is shown to be submodular. Papagelis~\cite{papagelis2015averagepathlength} shows that $k$-GRIP with the average shortest path length as a robustness metric does not satisfy the submodularity constraint, but accurate greedy solutions can be obtained. Van~Mieghem \emph{et al}.\ \cite{vanmieghem2011spectralradius} consider a link removal problem with the spectral radius (largest eigenvalue of the adjacency matrix) as a robustness metric and prove this problem is NP-hard. Baras and Hovareshti \cite{baras2009spanningtrees} consider the problem of adding $k$ links to a given network, such that the number of spanning trees in the graph is maximised.

We investigate the effective graph resistance (also known as the Kirchhoff index), which was proposed as robustness metric in \cite{ellens2011graphresistance}. The NP-hardness of $k$-GRIP for the effective graph resistance was proven in \cite{kooij2023NPhard}. Summers \emph{et al}.\ \cite{summers2015graphresistance} attempted to prove that $k$-GRIP with the effective graph resistance is submodular. Later, they corrected their own statement in an online document \cite{summers2017correction}, showing a counterexample for submodularity. Nevertheless, the greedy algorithm appears to yield near-optimal solutions. Wang \emph{et al}.\ \cite{wang2014graphresistance} considered adding a single link and derived bounds for the quality of the greedy algorithm. They additionally investigated different strategies to find the optimal link to add. Pizzuti and Socievole~\cite{pizzuti2018graphresistance} introduce a genetic algorithm as a heuristic to find the best link to add. Clemente and Cornaro~\cite{clemente2020graphresistance} derived bounds for the effective graph resistance after adding/removing one or multiple links. Ghosh \emph{et al}.\ \cite{ghosh2008graphresistance} considered the case of weighted links, under a fixed allocation budget, for which an efficient (polynomial-time) algorithm is provided. Etesami~\cite{etesami2021resistance} and Chan \emph{et al}.\ \cite{chan2022resistance} consider maximising and minimizing the effective graph resistance between source node $s$ and target node $t$ under a fixed allocation budget, respectively.

Results for $k$-GRIP may, besides the considered robustness metric, also depend on the optimisation problem itself. For example, \cite{clark2011graphresistance,clark2014graphresistance,clark2017graphresistance} consider a node-selection optimisation problem with the effective graph resistance as robustness metric, whereas $k$-GRIP considers link addition with the effective graph resistance, which is fundamentally different. Even though the objective function is the same (minimising the effective graph resistance), the problem constraints are very different. In their case, submodularity of the effective graph resistance holds, whereas in our case, we prove that the considered problem does not even satisfy the condition of generalized submodularity.

\section{Background}\label{sec_background}
In this paper we consider undirected, connected simple graphs $G=(V,E)$ without self-loops. Here $V$ denotes the set of $N$ vertices, while $E$ is the set of $L$ links connecting vertex pairs of $V$. The notation $i \sim j$ indication that nodes $i$ and $j$ are adjacent in $G$. We let $G^c = (V,E^c)$ denote the complementary graph of $G$, where $E^c = \{(u,v) | u, v \in V, u \neq v, (u,v) \not\in E\}$. 
The adjacency matrix $A$ of $G$ is an $N \times N$ symmetric matrix with elements $a_{ij}$ that are either 1 or 0 depending on whether there is a
link between nodes $i$ and $j$ or not. The Laplacian matrix $Q$ of $G$ is an $N \times N$ symmetric matrix $Q=\Delta - A$, where $\Delta = diag(d_i)$ is the $N \times N$ diagonal degree matrix with the elements $d_i = \sum_{j=1}^N a_{ij}$.  The eigenvalues of $Q$ are all real and non-negative and can be ordered as $0=\mu_1 \leq \mu_2 \leq \cdots \leq \mu_N$.

Interpreting the graph $G$ as an electrical network whose edges are resistors of $1\Omega$, the effective resistance~$\omega_{ij}$ between node $i$ and $j$ can be computed based on Kirchoff's circuit laws, where it assumed that a unit current is injected into $G$ at $i$ and extracted at $j$. Then the \emph{effective graph resistance}~$R_G$, also known as the \emph{Kirchhoff index}, is defined as the sum of the effective resistances over all node pairs \cite{kleinrandic1993resistance}:

\begin{equation}\label{eq_graph_res_def}
	R_G(G) = \sum_{1 \leq i < j \leq N} \omega_{ij}.
\end{equation}

The effective graph resistance $R_G$ can also be related to eigenvalues of the Laplacian matrix $L$ in the following way \cite{ellens2011graphresistance}
\begin{equation}\label{eq_graph_res_laplacian}
	R_G = N \sum_{i=2}^N \frac{1}{\mu_i}
\end{equation}
where $\mu_i$ denotes the $i^{th}$ eigenvalue of the Laplacian matrix $L$, where the eigenvalues are ordered from small to large. 

We can now formally formulate the optimization problem we want to address.

\begin{problem}[$k$-GRIP for the effective graph resistance]
	Given an undirected, connected, simple graph $G = (V,E)$ and a non-negative integer $k$, find a subset $B \subseteq E^c$ of size $|B| = k$ which minimizes the effective graph resistance $R_G(H)$ for the graph $H = (V, E \cup B)$.
	\label{problem_RG}
\end{problem}

In order to formulate the performance bound for the greedy algorithm given by \cite{nemhauser1978subdmodularity}, we first need two definitions.
\begin{definition}\label{def_monotone}
	A function $f$ is monotonically increasing if and only if $f(S) \leq f(T)$ for all $S \subseteq T$.
\end{definition}

The notion of submodularity is defined as follows.
\begin{definition}[\cite{nemhauser1978subdmodularity}]\label{def_submodularity}
	A function $f$ on a set $W$ is called \textbf{submodular} if
	\begin{equation*}
		f(S \cup \{v \}) - f(S) \geq f(R \cup \{v\}) - f(R)
	\end{equation*}
	for all $S \subseteq R \subset W$ and all $v \in W \backslash R$.
\end{definition}

According to~\cite{nemhauser1978subdmodularity}, if the function $f$ is monotonically increasing and submodular, then the obtained greedy solution is at least $(1-\frac{1}{e})$-close to the optimal solution.  
However, the effective graph resistance is known to be monotonically decreasing upon the addition of links \cite[Thm 2.7]{ellens2011graphresistance}. Therefore, we propose a linear transformation of the effective graph resistance such that the resulting function is monotonically increasing, and that for connected graphs its value is bounded between 0 and 1. According to \cite{ghosh2008graphresistance} for connected graphs the maximum value for the effective graph resistance is $\binom{N-1}{3}$, which is obtained for the path graph $P_N$. The minimum value for the effective graph resistance, which is $N-1$, occurs for the complete graph~$K_N$. We define the normalized effective graph resistance $r_G$ as
\begin{equation}\label{eq_graph_resistance_scaled}
	r_G = \frac{\binom{N-1}{3} - R_G}{\binom{N-1}{3} - (N-1)}
\end{equation}
which is bounded between 0 and 1 for connected graphs and is monotonically increasing. We can now reformulate Problem 1 as an optimization problem for $r_G$:

\begin{problem}[$k$-GRIP for the normalized effective graph resistance]
	Given an undirected, connected, simple graph $G = (V,E)$ and a non-negative integer $k$, find a subset $B \subseteq E^c$ of size $|B| = k$ which maximizes the normalized effective graph resistance $r_G(H)$ for the graph $H = (V, E \cup B)$.
	\label{problem_rG}
\end{problem}

It was shown by~\cite{summers2017correction} that the function $r_G$ is not submodular. In order to deal with non-submodular functions, the notion of submodularity can be extended as follows:


\begin{definition}[\cite{daskempe2011submodularityratio}]\label{def_submodularity_ratio}
	The \textbf{submodularity ratio} $\gamma$ of a function $f$ on a set $W$ is the largest $\gamma$ in the interval $[0,1]$ that satisfies
	\begin{equation*}
		f(S \cup \{v \}) - f(S) \geq \gamma (f(R \cup \{v\}) - f(R))
	\end{equation*}
	for all $S \subseteq R \subset V$ and all $v \in W \backslash R$.
\end{definition}
If $\gamma =1$, the function $f$ is submodular. If $0 < \gamma < 1$, the function~$f$ is called generalised submodular. The submodularity ratio $\gamma$ quantifies how close the metric $f$ is to being submodular. 

We will now show that for a given graph $G$, assuming that $W$ represents the set of $L^c$ non-existing links, computing $\gamma$ by verifying all possible choices for the sets $S, R$ and $W$, is very time-consuming. First, we require that $v \in W \backslash R$, so there are $L^c$ possible choices for $v$. Then only $L^c-1$ elements remain for $S$ and $W$. Now let $m$ be the size of the set $S$. Note that $S$ can be empty. It is required that $S$ and $R$ are not equal to $W$, because $v \in W \backslash R$. So, the size $m$ of the set $S$ runs from $m=0$ to $m=L^c-1$. For a given $m$, there are exactly $\binom{L^c-1}{m}$ possibilities to choose $m$ elements out of a total of $L^c-1$ elements. There are no further constraints for the set $S$. After choosing $S$, the set $R$ can still contain $L^c - m - 1$ elements. Thus, in total there are $2^{L^c - m - 1}$ possibilities for the set $R$. Then we conclude that
\begin{equation*}
	\#(S,R,v)= L^c \sum_{m=0}^{L^c-1} \binom{L^c-1}{m} 2^{L^c - m - 1} = L^c \cdot 3^{L^c - 1}
\end{equation*}
which grows extremely fast with the number of non-existent links $L^c$. For each combination of $S, R$ and $v$, the effective graph resistance must be computed. Using Eq.\ \eqref{eq_graph_res_laplacian}, the effective graph resistance can be computed in $\O(N^3)$ operations. Thus, verifying Definition~\ref{def_submodularity_ratio} requires $\O(N^3 \cdot L^c \cdot 3^{L^c-1})$ operations, which is infeasible for any moderately-sized graph.

We now state the greedy algorithm of $k$-GRIP for the normalized effective graph resistance $r_G$ in Algorithm~1. Greedy algorithm 1 runs in $\O( k \cdot W \cdot |f|)$ operations, where $|f|$ is the computation time of the metric~$f$. Calculating the effective graph resistance using Eq.\ \eqref{eq_graph_res_laplacian} requires $\O(N^3)$ operations. Thus, the greedy algorithm requires $\O(k N^5)$ operations for sparse graphs. 

\begin{algorithm}
	\label{alg_greedy}
	\caption{Greedy algorithm for $k$-GRIP.}
	\begin{algorithmic}[1]
		\State Given graph $G_1$, set of placeable links $W$ and number $k$ of links to be placed.
		\For{$i=1,\ldots, k$}
		\State $r_{G,\textnormal{opt}} \gets \infty$
		\State $W_{\textnormal{opt}} \gets \emptyset$
		\For{$j=1,\ldots, |W|$}
		\State Compute normalized effective graph resistance $r_G$ of $G_i \cup \{W_j\}$
		\If{$r_G(G_i \cup W_j) > r_{G,\textnormal{opt}}$}
		\State $W_{\textnormal{opt}} \gets W_j$
		\State $r_{G,\textnormal{opt}} \gets r_G(G_i \cup \{W_j\})$
		\EndIf
		\EndFor
		\State $G_{i+1} \gets G_i \cup \{V_{\textnormal{opt}}\}$
		\State $W \gets W \backslash W_j$
		\EndFor
		\State \textbf{Output:} $G_{k+1}$
	\end{algorithmic}
\end{algorithm}

Because $r_G$ is not submodular, the performance bound of~\cite{nemhauser1978subdmodularity}, i.e. the $(1-\frac{1}{e})$-closeness of the greedy solution to the optimal one, is not
guaranteed. However, the following result can be used for non-submodular functions:

\begin{theorem}[\cite{bian2017curvature}]\label{Bian}
	Let $f$ be a monotonically increasing, nonnegative function with submodularity ratio $\gamma\in[0,1]$ and curvature $\alpha\in[0,1]$. Then the greedy Algorithm 1 has the following guaranteed solution quality:
	\begin{equation}\label{eq_submodularity_quality}
		f_{\textnormal{greedy}} \geq \frac{1}{\alpha} \left( 1 - e^{-\gamma \alpha} \right) f_{\textnormal{optimal}}
	\end{equation}
\end{theorem}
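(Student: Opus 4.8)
The plan is to adapt the classical Nemhauser--Wolsey--Fisher analysis of the greedy algorithm, carrying the submodularity ratio $\gamma$ through the ``covering'' step and the curvature $\alpha$ through a telescoping estimate. Let $W$ denote the ground set of placeable links and let $S_0\subseteq S_1\subseteq\cdots\subseteq S_k$ be the sets produced by Algorithm~1, so that $S_i=S_{i-1}\cup\{g_i\}$, where $g_i$ is the link maximising the marginal gain $f(S_{i-1}\cup\{v\})-f(S_{i-1})$ over $v\in W\setminus S_{i-1}$; write $\delta_i:=f(S_i)-f(S_{i-1})$ and let $O$ be an optimal set with $|O|=k$, so $f(O)=f_{\textnormal{optimal}}$ and $f(S_k)=f_{\textnormal{greedy}}$. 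I would first normalise: since the submodularity ratio, the curvature, monotonicity and nonnegativity are all unaffected by replacing $f$ with $f-f(\emptyset)$ (and the greedy run is unchanged), and since $\tfrac1\alpha(1-e^{-\gamma\alpha})\le1$ because $e^{-\gamma\alpha}\ge e^{-\alpha}\ge1-\alpha$, it suffices to prove the bound under the normalisation $f(\emptyset)=0$; the general case follows by adding $f(\emptyset)$ back to both sides.

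The first step is a per-step inequality. Telescoping the pairwise bound of Definition~\ref{def_submodularity_ratio} over the elements of $O\setminus S_{i-1}$, taken in any order with the running prefix playing the role of the larger set, upgrades it to the set form
\begin{equation*}
\sum_{v\in O\setminus S_{i-1}}\bigl(f(S_{i-1}\cup\{v\})-f(S_{i-1})\bigr)\ \ge\ \gamma\,\bigl(f(S_{i-1}\cup O)-f(S_{i-1})\bigr).
\end{equation*}
Because $g_i$ maximises the marginal gain over all of $W\setminus S_{i-1}$ and $|O\setminus S_{i-1}|\le k$, the left-hand side is at most $k\,\delta_i$, so $k\,\delta_i\ge\gamma\bigl(f(S_{i-1}\cup O)-f(S_{i-1})\bigr)$.

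Next I would use the curvature to lower-bound $f(S_{i-1}\cup O)$. Telescoping the marginal gains as the elements of $O\setminus S_{i-1}$ are added on top of $S_{i-1}$ and applying the defining property of $\alpha$ (the marginal gain of an element over a set is at least $(1-\alpha)$ times its marginal gain over any subset of that set) gives $f(S_{i-1}\cup O)-f(S_{i-1})\ge(1-\alpha)\,f(O\setminus S_{i-1})$, and a symmetric telescoping over the elements of $S_{i-1}\setminus O$ added on top of $O$ gives $f(S_{i-1}\cup O)\ge f(O)+(1-\alpha)\,f(S_{i-1}\setminus O)$. Combining the per-step inequality with such a curvature bound produces a linear recursion which, after the bookkeeping of \cite{bian2017curvature}, takes the form $f(S_i)\ge\bigl(1-\tfrac{\gamma\alpha}{k}\bigr)f(S_{i-1})+\tfrac{\gamma}{k}f(O)$. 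Subtracting the fixed point $f(O)/\alpha$, iterating from $f(S_0)=f(\emptyset)=0$, and using $(1-x/k)^k\le e^{-x}$ then yields
\begin{equation*}
f_{\textnormal{greedy}}=f(S_k)\ \ge\ \frac1\alpha\Bigl(1-\bigl(1-\tfrac{\gamma\alpha}{k}\bigr)^k\Bigr)f(O)\ \ge\ \frac1\alpha\bigl(1-e^{-\gamma\alpha}\bigr)f_{\textnormal{optimal}},
\end{equation*}
with the value of the bound read as $\gamma\,f_{\textnormal{optimal}}$ when $\alpha=0$.

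The main obstacle I expect is precisely the passage from the curvature bound to the recursion. In the submodular case ($\gamma=1$) one recovers the clean argument of Conforti--Cornu\'ejols, where the telescoping over greedy steps collapses directly to $f(S_{i-1}\cup O)\ge f(O)+(1-\alpha)f(S_{i-1})$; for a genuinely non-submodular $f$ the marginal gains along the greedy run can exceed singleton gains, so the overlap $O\cap S_{i-1}$ must be accounted for (otherwise the naive telescoping over-counts), which is the delicate point and the reason the full argument is that of \cite{bian2017curvature}. The remaining steps --- converting the pairwise definition of $\gamma$ into its set form, unrolling the linear recursion, and invoking $(1-x/k)^k\le e^{-x}$ --- are routine.
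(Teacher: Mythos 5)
First, note that the paper offers no proof of Theorem~\ref{Bian} at all: it is quoted verbatim from \cite{bian2017curvature}, so there is no internal argument to compare yours against, and your proposal has to stand or fall as a reconstruction of the proof in that reference.

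Your outline does follow the strategy of \cite{bian2017curvature}, and the routine parts are correct: the pairwise Definition~\ref{def_submodularity_ratio} indeed telescopes (with the running prefix as the larger set $R$) to the set form $\sum_{v\in O\setminus S_{i-1}}\bigl(f(S_{i-1}\cup\{v\})-f(S_{i-1})\bigr)\ge\gamma\bigl(f(S_{i-1}\cup O)-f(S_{i-1})\bigr)$, the greedy choice bounds that sum by $k\,\delta_i$, the two curvature telescopings are valid instances of the definition of $\alpha$, and the normalisation $f(\emptyset)=0$ is harmless. The genuine gap is exactly where you locate it and then wave it away: the recursion $f(S_i)\ge\bigl(1-\tfrac{\gamma\alpha}{k}\bigr)f(S_{i-1})+\tfrac{\gamma}{k}f(O)$ does \emph{not} follow from the inequalities you display once $O\cap S_{i-1}\neq\emptyset$. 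Writing $f(S_{i-1})=\sum_j\delta_j$ and using your curvature bound in increment form, $f(S_{i-1}\cup O)\ge f(O)+(1-\alpha)\sum_{j:\,g_j\notin O}\delta_j$, the per-step inequality only gives $\tfrac{k}{\gamma}\,\delta_i\ \ge\ f(O)-\alpha\sum_{j:\,g_j\notin O}\delta_j-\sum_{j:\,g_j\in O}\delta_j$, i.e.\ the increments of greedy elements lying in $O$ enter with coefficient $1$ rather than $\alpha$, which is strictly weaker than the needed $f(O)-\alpha f(S_{i-1})$ and breaks the fixed-point iteration. Accounting for this overlap between the greedy set and the optimum is the actual mathematical content of the theorem (it is what the appendix of \cite{bian2017curvature} is devoted to), so deferring it to ``the bookkeeping of \cite{bian2017curvature}'' leaves your argument incomplete at its only nontrivial step; everything you prove yourself is the easy part. (Minor point: at $\alpha=0$ the right-hand side of Eq.\ \eqref{eq_submodularity_quality} should be read as the limit $\lim_{\alpha\to 0}\tfrac{1}{\alpha}\bigl(1-e^{-\gamma\alpha}\bigr)=\gamma$, as you indicate.)
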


Here, the curvature $\alpha$ is defined as follows:

\begin{definition}[\cite{bian2017curvature}]
	The \textbf{curvature} $\alpha$ of a function $f$ on a set $W$ is the smallest $\alpha$ in the interval $[0,1]$ that satisfies
	\begin{equation*}
		f(S \cup \Omega ) - f((S \backslash v) \cup \Omega) \geq (1-\alpha) (f(S) - f(S \backslash v))
	\end{equation*}
	for all $\Omega,S \subset W$ and all $v \in S \backslash \Omega$.
\end{definition}

The curvature $\alpha$ quantifies how close $f$ is to being supermodular. A function is called supermodular if $-f$ is submodular, where supermodularity corresponds to $\alpha =0$.

Recently \cite{liu2022} improved the result in Theorem \ref{Bian} as follows:

\begin{equation}\label{zhichengliu}
	f_{\textnormal{greedy}} \geq 
	(1 - (1 - \gamma + \gamma\alpha) e^{-\gamma}) f_{\textnormal{optimal}}
\end{equation}

Note that for $\alpha=1$ and $\gamma=1$, Eqs.\ \eqref{eq_submodularity_quality}--\eqref{zhichengliu} both lead to the performance guarantee provided by the submodularity condition. In case $\gamma = 0$, the Eqs.\ \eqref{eq_submodularity_quality}--\eqref{zhichengliu} both simplify to $f_{\textnormal{greedy}} \geq 0$, which does not provide any performance guarantee, because it was already assumed that $f$ was nonnegative. Note that for $f \equiv r_G$ and by using Eq.\ \eqref{eq_graph_resistance_scaled}, $f_{\textnormal{greedy}} \geq 0$ is equivalent to $R_G \leq \binom{N-1}{3}$, which indeed does not provide any new insight because it is already known that for connected graphs $\binom{N-1}{3}$ is an upper bound for $R_G$.

In this paper we will construct a family of graphs for which $\gamma \to 0$ for $N \to \infty$.


\section{A counterexample for generalized submodularity}
\label{sec_counterexample}
Summers \emph{et al}.\ \cite{summers2017correction} have already shown that the normalized effective graph resistance $r_G$ does not satisfy submodularity (see Definition \ref{def_submodularity}). Their counterexample, which is also the smallest possible counterexample, is a graph with $N=5$ nodes shown in Fig.\ \ref{fig_counterexample_easy}. Starting point is a graph $G$ with 5 nodes and 5 links, denoted as solid black lines. The set $S$ is the empty set, while the element $v$ is the link between nodes 1 and 2, represented by a dashed green line. The set $R$ is the link between nodes 2 and 3, represented by a dotted red line. 

To compute $\gamma$ for this case we need to determine the ratio of $r_G(G \cup \{v\}) - r_G(G)$ and 
$r_G(G \cup R \cup \{v\}) - r_G(G \cup R)$. Using Eq.\ \eqref{eq_graph_resistance_scaled}, this comes down to computing the ratio between $R_G(G) - R_G(G \cup \{v\})$ and $R_G(G \cup R) - R_G(G \cup R \cup \{v\})$. Evaluation of the 4 involved expressions gives $R_G(G) \approx 13.33, R_G(G \cup \{v\}) = 10.25, R_G(G \cup R) = 10.25$ and $R_G(G \cup R \cup \{v\}) \approx 6.95$.  
Thus the gain of adding the element $v$ (the dashed green link) to the original graph (the graph $G$) is approximately equal to $13.33 - 10.25 = 3.08$ while adding the element $v$ to the augmented graph (the graph $G$ with the link $R$ added), gives a larger gain of approximately $10.25 - 6.95 = 3.30$. Clearly this opposes the definition of submodularity. Thus, $k$-GRIP for the normalized effective graph resistance is not submodular. We can also deduce from the example that $\gamma \lesssim \frac{3.08}{3.30} = 0.935$.

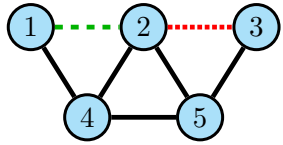
\begin{figure}[!ht]
	\centering
	\begin{tikzpicture}
		\node[mystyle, label=center:1] (1) at (-1.5,0) {};
		\node[mystyle, label=center:2] (2) at (0,0) {};
		\node[mystyle, label=center:3] (3) at (1.5,0) {};
		\node[mystyle, label=center:4] (4) at (-0.75,-1.2) {};
		\node[mystyle, label=center:5] (5) at (0.75,-1.2) {};
		
		\path [-,line width=1.8pt] (1) edge node[right] {} (4);
		\path [-,line width=1.8pt] (3) edge node[right] {} (5);
		\path [-,line width=1.8pt] (2) edge node[left] {} (4);
		\path [-,line width=1.8pt] (2) edge node[left] {} (5);
		\path [-,line width=1.8pt] (4) edge node[left] {} (5);
		\path [dashed,line width=1.8pt,color=black!30!green] (1) edge node[left] {} (2);
		\path [densely dotted,line width=1.8pt,color=red] (2) edge node[right] {} (3);
	\end{tikzpicture}
	\caption{The smallest counterexample showing that the normalized effective graph resistance $r_G$ in $k$-GRIP is not submodular. The graph $G$ consists of 5 nodes with 5 links, denoted as solid black lines. The set $S$ is the empty set, while the element $v$ is the link between nodes 1 and 2, represented by a dashed green line. The set $R$ is the link between nodes 2 and 3, represented by a dotted red line.}
	\label{fig_counterexample_easy}
\end{figure}

This result can be improved by showing that the normalized effective graph resistance even does not satisfy generalised submodularity. It is sufficient to construct a counterexample where $\gamma\to 0$. As a consequence, Eqs. \eqref{eq_submodularity_quality}--\eqref{zhichengliu} then imply that the greedy solution does not have any guaranteed performance.

To construct a counterexample we consider a graph $G$ on $2N$ nodes, $N \geq 4, N$ even. We start building the graph $G$ by first considering a complete bipartite graph $K_{2,N-2}$ on $N$ nodes. We define node $i$ and $j$ to be the nodes in the group with 2 nodes. Then, we select two nodes from the group with $N-2$ nodes and attach a path graph of length $N/2$ to each of those nodes. The resulting graph $G$ is depicted in Fig.\ \ref{fig_counterexample_S}. Again, we assume the set $S$ is the empty set.

The element $v$ is taken to be the link between node $i$ and $j$, as visualised in Fig.\ \ref{fig_counterexample_S_v}, represented by a dashed green line. The set $R$ contains two links: we connect the end of one path graph to node $i$ and the end of the other path graph to node $j$. The links in the set $R$ are represented by dotted red lines, see Fig.\ \ref{fig_counterexample_R}.
Finally, Fig.\ \ref{fig_counterexample_R_v} shows graph $G$ augmented with the sets $R$ and $v$.
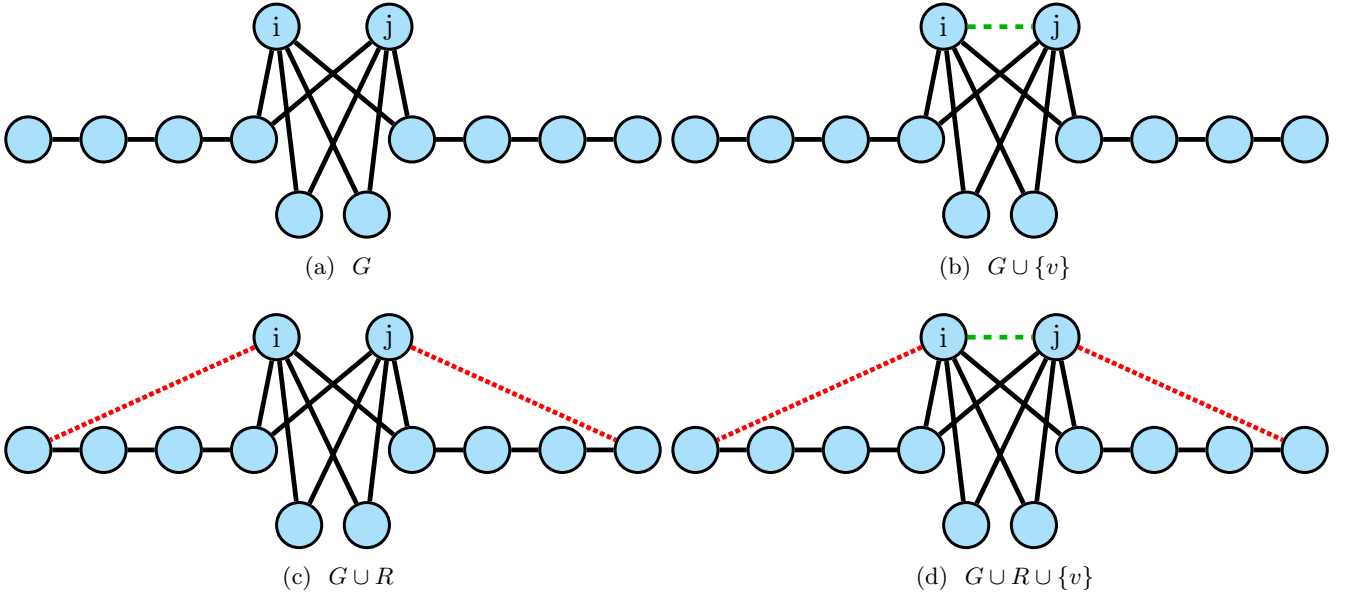
\begin{figure}[!ht]
	\centering
	\subfloat[\label{fig_counterexample_S} $G$]{%
		\begin{tikzpicture}
			\node[mystyle, label=center:i] (0) at (0,1.5) {};
			\node[mystyle, label=center:j] (1) at (1.5,1.5) {};
			\node[mystyle, label=center:] (2) at (1.8,0) {};
			\node[mystyle, label=center:] (3) at (-0.3,0) {};
			\node[mystyle, label=center:] (5) at (-1.3,0) {};
			\node[mystyle, label=center:] (8) at (-2.3,0) {};
			\node[mystyle, label=center:] (9) at (-3.3,0) {};
			\node[mystyle, label=center:] (6) at (2.8,0) {};
			\node[mystyle, label=center:] (7) at (3.8,0) {};
			\node[mystyle, label=center:] (10) at (4.8,0) {};
			\node[mystyle, label=center:] (11) at (0.3,-1) {};
			\node[mystyle, label=center:] (12) at (1.2,-1) {};
			
			\path [-,line width=1.8pt] (0) edge node[left] {} (2);
			\path [-,line width=1.8pt] (0) edge node[left] {} (3);
			\path [-,line width=1.8pt] (1) edge node[left] {} (2);
			\path [-,line width=1.8pt] (1) edge node[right] {} (3);
			\path [-,line width=1.8pt] (3) edge node[left] {} (5);
			\path [-,line width=1.8pt] (2) edge node[left] {} (6);
			\path [-,line width=1.8pt] (6) edge node[left] {} (7);
			\path [-,line width=1.8pt] (5) edge node[left] {} (8);
			\path [-,line width=1.8pt] (8) edge node[left] {} (9);
			\path [-,line width=1.8pt] (7) edge node[left] {} (10);
			\path [-,line width=1.8pt] (0) edge node[left] {} (11);
			\path [-,line width=1.8pt] (0) edge node[left] {} (12);
			\path [-,line width=1.8pt] (1) edge node[left] {} (11);
			\path [-,line width=1.8pt] (1) edge node[left] {} (12);
		\end{tikzpicture}
	}
	\subfloat[\label{fig_counterexample_S_v} $G \cup \{v\}$]{%
		\begin{tikzpicture}
			\node[mystyle, label=center:i] (0) at (0,1.5) {};
			\node[mystyle, label=center:j] (1) at (1.5,1.5) {};
			\node[mystyle, label=center:] (2) at (1.8,0) {};
			\node[mystyle, label=center:] (3) at (-0.3,0) {};
			\node[mystyle, label=center:] (5) at (-1.3,0) {};
			\node[mystyle, label=center:] (8) at (-2.3,0) {};
			\node[mystyle, label=center:] (9) at (-3.3,0) {};
			\node[mystyle, label=center:] (6) at (2.8,0) {};
			\node[mystyle, label=center:] (7) at (3.8,0) {};
			\node[mystyle, label=center:] (10) at (4.8,0) {};
			\node[mystyle, label=center:] (11) at (0.3,-1) {};
			\node[mystyle, label=center:] (12) at (1.2,-1) {};
			
			\path [dashed,line width=1.8pt,color=black!30!green] (0) edge node[left] {} (1);
			\path [-,line width=1.8pt] (0) edge node[left] {} (2);
			\path [-,line width=1.8pt] (0) edge node[left] {} (3);
			\path [-,line width=1.8pt] (1) edge node[left] {} (2);
			\path [-,line width=1.8pt] (1) edge node[right] {} (3);
			\path [-,line width=1.8pt] (3) edge node[left] {} (5);
			\path [-,line width=1.8pt] (2) edge node[left] {} (6);
			\path [-,line width=1.8pt] (6) edge node[left] {} (7);
			\path [-,line width=1.8pt] (5) edge node[left] {} (8);
			\path [-,line width=1.8pt] (8) edge node[left] {} (9);
			\path [-,line width=1.8pt] (7) edge node[left] {} (10);
			\path [-,line width=1.8pt] (0) edge node[left] {} (11);
			\path [-,line width=1.8pt] (0) edge node[left] {} (12);
			\path [-,line width=1.8pt] (1) edge node[left] {} (11);
			\path [-,line width=1.8pt] (1) edge node[left] {} (12);
		\end{tikzpicture}
	} \\
	\subfloat[\label{fig_counterexample_R} $G \cup R$]{%
		\begin{tikzpicture}
			\node[mystyle, label=center:i] (0) at (0,1.5) {};
			\node[mystyle, label=center:j] (1) at (1.5,1.5) {};
			\node[mystyle, label=center:] (2) at (1.8,0) {};
			\node[mystyle, label=center:] (3) at (-0.3,0) {};
			\node[mystyle, label=center:] (5) at (-1.3,0) {};
			\node[mystyle, label=center:] (8) at (-2.3,0) {};
			\node[mystyle, label=center:] (9) at (-3.3,0) {};
			\node[mystyle, label=center:] (6) at (2.8,0) {};
			\node[mystyle, label=center:] (7) at (3.8,0) {};
			\node[mystyle, label=center:] (10) at (4.8,0) {};
			\node[mystyle, label=center:] (11) at (0.3,-1) {};
			\node[mystyle, label=center:] (12) at (1.2,-1) {};
			
			\path [-,line width=1.8pt] (0) edge node[left] {} (2);
			\path [-,line width=1.8pt] (0) edge node[left] {} (3);
			\path [-,line width=1.8pt] (1) edge node[left] {} (2);
			\path [-,line width=1.8pt] (1) edge node[right] {} (3);
			\path [-,line width=1.8pt] (3) edge node[left] {} (5);
			\path [-,line width=1.8pt] (2) edge node[left] {} (6);
			\path [-,line width=1.8pt] (6) edge node[left] {} (7);
			\path [-,line width=1.8pt] (5) edge node[left] {} (8);
			\path [-,line width=1.8pt] (8) edge node[left] {} (9);
			\path [-,line width=1.8pt] (7) edge node[left] {} (10);
			\path [-,line width=1.8pt] (0) edge node[left] {} (11);
			\path [-,line width=1.8pt] (0) edge node[left] {} (12);
			\path [-,line width=1.8pt] (1) edge node[left] {} (11);
			\path [-,line width=1.8pt] (1) edge node[left] {} (12);
			
			\path [densely dotted,color=red,line width=1.8pt] (0) edge node[left] {} (9);
			\path [densely dotted,color=red,line width=1.8pt] (1) edge node[left] {} (10);
		\end{tikzpicture}
	}
	\subfloat[\label{fig_counterexample_R_v} $G \cup R \cup \{v\}$]{%
		\begin{tikzpicture}
			\node[mystyle, label=center:i] (0) at (0,1.5) {};
			\node[mystyle, label=center:j] (1) at (1.5,1.5) {};
			\node[mystyle, label=center:] (2) at (1.8,0) {};
			\node[mystyle, label=center:] (3) at (-0.3,0) {};
			\node[mystyle, label=center:] (5) at (-1.3,0) {};
			\node[mystyle, label=center:] (8) at (-2.3,0) {};
			\node[mystyle, label=center:] (9) at (-3.3,0) {};
			\node[mystyle, label=center:] (6) at (2.8,0) {};
			\node[mystyle, label=center:] (7) at (3.8,0) {};
			\node[mystyle, label=center:] (10) at (4.8,0) {};
			\node[mystyle, label=center:] (11) at (0.3,-1) {};
			\node[mystyle, label=center:] (12) at (1.2,-1) {};
			
			\path [dashed,line width=1.8pt,color=black!30!green] (0) edge node[left] {} (1);
			\path [-,line width=1.8pt] (0) edge node[left] {} (2);
			\path [-,line width=1.8pt] (0) edge node[left] {} (3);
			\path [-,line width=1.8pt] (1) edge node[left] {} (2);
			\path [-,line width=1.8pt] (1) edge node[right] {} (3);
			\path [-,line width=1.8pt] (3) edge node[left] {} (5);
			\path [-,line width=1.8pt] (2) edge node[left] {} (6);
			\path [-,line width=1.8pt] (6) edge node[left] {} (7);
			\path [-,line width=1.8pt] (5) edge node[left] {} (8);
			\path [-,line width=1.8pt] (8) edge node[left] {} (9);
			\path [-,line width=1.8pt] (7) edge node[left] {} (10);
			\path [-,line width=1.8pt] (0) edge node[left] {} (11);
			\path [-,line width=1.8pt] (0) edge node[left] {} (12);
			\path [-,line width=1.8pt] (1) edge node[left] {} (11);
			\path [-,line width=1.8pt] (1) edge node[left] {} (12);
			
			\path [densely dotted,color=red,line width=1.8pt] (0) edge node[left] {} (9);
			\path [densely dotted,color=red,line width=1.8pt] (1) edge node[left] {} (10);
		\end{tikzpicture}
	}
	\caption{The graph $G$ consisting of $2N$ nodes; two path graphs with $N/2$ nodes are attached to a complete bipartite graph $K_{2,N-2}$ on $N$ nodes. The element $v$ is the dashed green link in the complete bipartite graph between node $i$ and $j$. The set $R$, represented by dotted red links, is the union of the link connecting node $i$ with the left-most node of the graph and the link connecting node $j$ with the right-most node of the graph. In this example, $N=6$.}
	\label{fig_counterexample}
\end{figure}

We now present our main results. 

\begin{theorem}\label{thm_counterexample}
	Consider the graph $G$ depicted in Fig.\ \ref{fig_counterexample}, the element $v$ and the set $R$. Then the following holds 
	\begin{equation}\label{S}
		R_G(G) - R_G(G \cup \{ v \})  = \frac{4}{N-2}
	\end{equation}
	and 
	\begin{equation}\label{R}
		R_G(G \cup R) - R_G(G \cup R \cup \{ v\})  = \frac{2 \,(N+3)\,(N+4)\,(N+5)}{3\,(N+1)\,(N+2)\,(N^2+N-4)}
	\end{equation}
\end{theorem}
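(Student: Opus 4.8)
The plan is to compute all four values of the effective graph resistance directly using the electrical-network interpretation, exploiting the massive symmetry of the construction rather than the Laplacian-eigenvalue formula \eqref{eq_graph_res_laplacian}, which would be hopeless for a variable-size graph. The key structural observation is that the complete bipartite core $K_{2,N-2}$ has a two-fold symmetry swapping $i$ and $j$, and an $(N-2)$-fold symmetry permuting the middle nodes; the two attached paths break the middle-node symmetry only at the two anchor nodes, so all computations reduce to series/parallel reductions plus a handful of star-mesh (Y-$\Delta$) transformations.

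For \eqref{S}: since $S=\emptyset$, I would compute $R_G(G)$ and $R_G(G\cup\{v\})$ separately, but in fact only their difference is needed, so the smarter route is to use the well-known edge-addition formula for the Kirchhoff index, $R_G(G)-R_G(G+e_{ij}) = N\cdot \frac{\omega_{ij}(G)^2 \,?}{\,\dots}$ — more precisely, adding a unit resistor between $i$ and $j$ changes the sum of effective resistances by a closed-form expression in terms of $\omega_{ij}(G)$ and the ``current-transfer'' quantities; cleaner still is Rayleigh/the rank-one Laplacian update $Q \mapsto Q + (e_i-e_j)(e_i-e_j)^T$, whose effect on $\sum_{k\ge 2}1/\mu_k$ one can track. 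In practice I would instead just argue that when $v=(i,j)$ is added, by the $i\leftrightarrow j$ symmetry of $G$ every effective resistance $\omega_{k\ell}$ with $k,\ell\neq i,j$ is unchanged except through the $i$–$j$ branch, and $\omega_{ij}$ itself drops from its value in $K_{2,N-2}$-with-pendant-paths to the parallel combination with the new $1\Omega$ edge. Computing $\omega_{ij}(G)$: nodes $i,j$ see the $N-2$ middle nodes; the two path-bearing middle nodes carry a dangling tree which, being a tree hanging off a single node, contributes zero to $\omega_{ij}$ (no current flows into a pendant tree when driving $i$ to $j$), so effectively $\omega_{ij}(G)=\omega_{ij}(K_{2,N-2}) = \frac{2}{N-2}$ (the $N-2$ parallel two-edge paths of resistance $2$). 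Then $\omega_{ij}(G\cup\{v\}) = \big(\tfrac{N-2}{2}+1\big)^{-1} = \frac{2}{N}$, and one must carefully account for how the other pair-resistances $\omega_{ik},\omega_{jk},\omega_{k\ell}$ shift; the total should collapse to $\frac{4}{N-2}$ after using the symmetry to pair up terms. The hardest bookkeeping here is the change in $\omega_{ik}$ for $k$ a path node — but again a pendant-tree current argument shows only the ``entry resistance'' to the path's anchor matters, reducing everything to constants and the single quantity above.

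For \eqref{R}: now the two red edges connect $i$ and $j$ to the far ends of the two paths, turning each dangling path into a genuine cycle/ladder that carries current, so the pendant-tree simplification is gone and this is the genuinely hard case. The approach is: in $G\cup R$, node $i$ is joined to the end of the left path and node $j$ to the end of the right path; together with $v$ this makes $i,j$, the two anchor middle nodes, and the two path-end nodes into a single richly connected cluster, while the remaining $N-4$ middle nodes are still just parallel two-edge $i$–$j$ bridges and the $N/2-1$ interior path nodes on each side form series chains. I would (i) contract the $N-4$ ``plain'' middle nodes into a single equivalent resistor $\frac{2}{N-4}$ between $i$ and $j$ in parallel with $v$'s $1\Omega$, giving $i$–$j$ resistance $\frac{2}{N-2}$ before the paths, then add the two anchor middle nodes back as two more parallel paths — so the ``core'' $i$–$j$ resistance is $\frac{2}{N-2}$ but the anchors are special because the paths hang off them and are now tied back. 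Then (ii) replace each length-$N/2$ path (with one end at an anchor middle node adjacent to both $i$ and $j$, other end tied to $i$ or $j$) by its equivalent network on the three terminals $\{i,j,\text{anchor}\}$ via repeated Y-$\Delta$, the path contributing a ladder whose two-terminal resistances are linear in $N/2$. Finally (iii) assemble the reduced few-node network and compute $R_G(G\cup R)$ and $R_G(G\cup R\cup\{v\})$; here I would use the formula $R_G = N\sum_{k\ge 2}\mu_k^{-1} = N\,\mathrm{tr}(L^\dagger)$ on the \emph{reduced weighted} multigraph only if that multigraph is small, or else sum $\omega$ over the (few) orbit representatives and multiply by orbit sizes. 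The appearance of the cubic polynomials $(N+3)(N+4)(N+5)$ and $(N+1)(N+2)(N^2+N-4) = (N+1)(N+2)(N-\tfrac{-1+\sqrt{17}}{2})\cdots$ — actually $N^2+N-4$ does not factor nicely over the rationals, which is a useful sanity flag — strongly suggests the final step is a ratio of determinants of $3\times3$ or $4\times4$ matrices whose entries are linear in $N$, so I would set up that small linear system symbolically.

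The main obstacle is case \eqref{R}: once both red edges are present the two attached paths stop being pendant trees and carry current, so one must actually solve the resistor ladder formed by each path together with its tie-back to $\{i,j\}$. Concretely the difficulty is handling a length-$N/2$ path each of whose endpoints connects to a different ``super-node'' while every interior vertex also must be counted in the Kirchhoff sum; the right tool is to write a transfer-matrix (second-order linear recurrence) for the ladder's effective resistances and for the partial sums $\sum \omega$ along the path, then plug in length $N/2$ — and the bulk of the proof will be verifying that these recurrences, combined with the constant-size core reduction and the $i\leftrightarrow j$ symmetry, telescope to exactly the stated rational function. I would also keep \eqref{S} as a consistency check: both formulas must be computed by the same reduction machinery, and \eqref{S} being the simple $\frac{4}{N-2}$ validates the pendant-tree lemma that \eqref{R} deliberately cannot use.
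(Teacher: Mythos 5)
The paper never computes any Kirchhoff index in full; both \eqref{S} and \eqref{R} follow from the edge-addition difference formula of Yang and Klein (Theorem~\ref{thm_add_link}), which expresses $R_G(H)-R_G(H\cup\{v\})$ for $v=(i,j)$ purely in terms of resistances inside $H$: $\omega_{ij}$, $\sum_k(\omega_{ik}-\omega_{jk})^2$ and $\bigl[\sum_k\omega_{ik}-\sum_k\omega_{jk}\bigr]^2$. You gesture at such a formula for \eqref{S} (with a question mark, guessing it involves only $\omega_{ij}$) but then abandon it, and for \eqref{R} a formula in $\omega_{ij}$ alone would in any case be false: in $G\cup R$ the $i\leftrightarrow j$ automorphism also swaps the two paths, so $\omega_{ik}\neq\omega_{jk}$ for path nodes, and it is exactly the sum $\sum_k(\omega_{ik}-\omega_{jk})^2$ over one path that produces the cubic polynomials in \eqref{R}. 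With the difference formula, symmetry annihilates the squared-sum term and the contributions of the $N-4$ symmetric middle nodes, and every remaining resistance is obtained by series/parallel/$\Delta$--$Y$ reduction of the planar graph $G\cup R$ (Eqs.~\eqref{om_ij}--\eqref{eq_diff_om_ik_jk}); this is the key idea your proposal is missing.

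Because of that, your plan for \eqref{R} contains a step that fails as stated: you propose to compute $R_G(G\cup R)$ and $R_G(G\cup R\cup\{v\})$ separately and, in step (iii), to evaluate $R_G=N\,\mathrm{tr}(L^\dagger)$ on the reduced few-node multigraph. Series/parallel/$Y$--$\Delta$ reductions preserve effective resistances between the retained terminals only; they do not preserve the Kirchhoff index, since the eliminated vertices (the $N-4$ plain middle nodes and all interior path nodes) still contribute $\Theta(N^2)$ pairs to the sum in \eqref{eq_graph_res_def}. Your fallback of summing $\omega$ over orbit representatives is sound in principle, but the pair-orbits are not ``few'': path positions are inequivalent, so you would need closed forms for path--path (same and opposite side), path--middle, path--core and core--core resistances and genuine double sums over path positions, none of which your transfer-matrix sketch sets up; this is far heavier than the paper's single sum over $1\le k\le N/2+1$. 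Your treatment of \eqref{S} is essentially viable (pendant paths carry no current, so $\omega_{ij}=2/(N-2)$, and by the $i\leftrightarrow j$ symmetry only pairs meeting $\{i,j\}$ change), but the decisive step ``the total should collapse to $4/(N-2)$'' is asserted rather than proved; the fact that each pair $(i,k)$, $(j,k)$ changes by the same amount $\omega_{ij}^2/\bigl(4(1+\omega_{ij})\bigr)$ again comes from the pairwise update formula (Theorem~\ref{thm_link_weight}) that your write-up stops short of invoking.
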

\begin{proof}
	See Appendix~\ref{app_proof}.
\end{proof}

\begin{theorem}\label{thm_counterexample2}
	Consider the graph $G$ depicted in Fig.\ \ref{fig_counterexample}. Then the submodularity ratio for $k$-GRIP for the normalized effective graph resistance $r_G$ satisfies
	\begin{equation}\label{eq_counterexample_gamma}
		\gamma \leq \frac{6\,(N+1)\,(N+2)\,(N^2+N-4)}{(N-2)\, N\, (N+3)\,(N+4)\,(N+5)}
	\end{equation}
	which, for large $N$ converges to zero. In other words, the normalized effective graph resistance $r_G$ is not generalised submodular and the accuracy bounds from Eqs.\ \eqref{eq_submodularity_quality}--\eqref{zhichengliu} do not provide any guaranteed performance for the greedy algorithm.
\end{theorem}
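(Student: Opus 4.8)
The plan is to extract the bound on $\gamma$ directly from the single admissible triple $(S,R,v)$ that has already been set up, and then pass to the limit $N\to\infty$. By Definition~\ref{def_submodularity_ratio}, $\gamma$ is the \emph{largest} constant in $[0,1]$ for which
\[
	f(S\cup\{v\}) - f(S) \;\ge\; \gamma\,\bigl(f(R\cup\{v\}) - f(R)\bigr)
\]
holds for \emph{every} admissible $S\subseteq R$ and $v\notin R$. Hence each individual admissible triple whose right-hand factor is strictly positive yields an upper bound on $\gamma$, namely the corresponding ratio. Taking $f=r_G$, $S=\emptyset$, and $v$, $R$ as in Fig.~\ref{fig_counterexample}, I would first check that the triple is admissible — $v=(i,j)$ and the two links of $R$ are non-edges of $G$, and $v\notin R$, all immediate from the construction — and that $r_G(G\cup R\cup\{v\}) - r_G(G\cup R)>0$, which holds because adding a genuinely new link to a connected graph strictly decreases the effective graph resistance \cite[Thm~2.7]{ellens2011graphresistance} and therefore strictly increases $r_G$. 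This already gives
\[
	\gamma \;\le\; \frac{r_G(G\cup\{v\}) - r_G(G)}{r_G(G\cup R\cup\{v\}) - r_G(G\cup R)}.
\]

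The second step is to replace this quotient of $r_G$-increments by the corresponding quotient of $R_G$-decrements. Since all four graphs $G$, $G\cup\{v\}$, $G\cup R$, $G\cup R\cup\{v\}$ share the same vertex set of $2N$ nodes, Eq.~\eqref{eq_graph_resistance_scaled} expresses $r_G$ as one and the same affine function of $R_G$ on all of them, with slope $-1/\bigl(\binom{2N-1}{3}-(2N-1)\bigr)$; the common normalizing constant therefore cancels in the ratio, leaving
\[
	\gamma \;\le\; \frac{R_G(G) - R_G(G\cup\{v\})}{R_G(G\cup R) - R_G(G\cup R\cup\{v\})}.
\]
Now I would substitute the two closed-form identities supplied by Theorem~\ref{thm_counterexample}, Eqs.~\eqref{S} and~\eqref{R}, into the right-hand side and simplify the resulting rational function of $N$; collecting factors produces precisely the bound in Eq.~\eqref{eq_counterexample_gamma}.

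It remains to read off the asymptotics and the stated consequence. In Eq.~\eqref{eq_counterexample_gamma} the numerator is a degree-$4$ polynomial in $N$ and the denominator a degree-$5$ polynomial, so the bound is $\O(1/N)$ and hence $\gamma\to 0$ as $N\to\infty$; in particular no positive lower bound on the submodularity ratio can hold uniformly over all $k$-GRIP instances for $r_G$, i.e.\ $r_G$ is not generalised submodular. Feeding $\gamma\to 0$ into the performance guarantees of Theorem~\ref{Bian} and Eq.~\eqref{zhichengliu} collapses both right-hand sides to $f_{\textnormal{greedy}} \ge 0\cdot f_{\textnormal{optimal}} = 0$, which is vacuous because $r_G\ge 0$ is already known; thus neither bound certifies anything about the greedy algorithm on this family, which is the assertion of the theorem.

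The substantive work all sits inside Theorem~\ref{thm_counterexample} (handled separately in the appendix): deriving the two effective-resistance identities in closed form, presumably via a rank-one Laplacian update for the edge $v=(i,j)$ combined with the $i\leftrightarrow j$ symmetry of $G$ and the residual symmetry of $G\cup R$ that swaps $i\leftrightarrow j$ together with the two tails. Granting those identities, the present proof is essentially bookkeeping: an admissibility and positivity check, the observation that the normalization cancels, one algebraic simplification, and a degree count. The single point to state with care is that the displayed quotient is genuinely below $1$ for every admissible $N\ge 4$ — equivalently, as Theorem~\ref{thm_counterexample} entails, adding $v$ helps the augmented graph $G\cup R$ more than it helps $G$ — so that $\gamma$ is really constrained by the right-hand side of Eq.~\eqref{eq_counterexample_gamma} rather than only by the trivial inequality $\gamma\le 1$.
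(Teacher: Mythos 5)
Your proposal is correct and follows essentially the same route as the paper: bound $\gamma$ from the single triple $(S=\emptyset,R,v)$, note via Eq.~\eqref{eq_graph_resistance_scaled} that the ratio of $r_G$-increments equals the ratio of $R_G$-decrements, substitute the closed forms from Theorem~\ref{thm_counterexample}, and read off the $\O(1/N)$ decay and the vacuousness of the bounds \eqref{eq_submodularity_quality}--\eqref{zhichengliu}. Your added checks (admissibility, strict positivity of the denominator, and that the ratio lies below 1) are sound refinements of the same argument, not a different method.
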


\begin{proof}
	We already observed that in order to estimate $\gamma$  for the normalized effective graph resistance we need to determine the ratio of $r_G(G \cup \{v\}) - r_G(G)$ and $r_G(G \cup R \cup \{v\}) - r_G(G \cup R)$, which equals the ratio between $R_G(G) - R_G(G \cup \{v\})$ and $R_G(G \cup R) - R_G(G \cup R \cup \{v\})$, according to Eq.\ \eqref{eq_graph_resistance_scaled}. Hence, Eq.\ \eqref{eq_counterexample_gamma} follows from  Eqs.\ \eqref{S}--\eqref{R}.
\end{proof}
Note that the right-hand side of Eq.\ \eqref{eq_counterexample_gamma} asymptotically behaves as $6/N$. Fig.\ \ref{fig_path_graph_performance} depicts the exact upper bound for $\gamma$ and its asymptote. For $2N \geq 100$, both curves are nearly indistinguishable.

\begin{figure}[H]
	\centering
	\includegraphics[width=0.7\columnwidth]{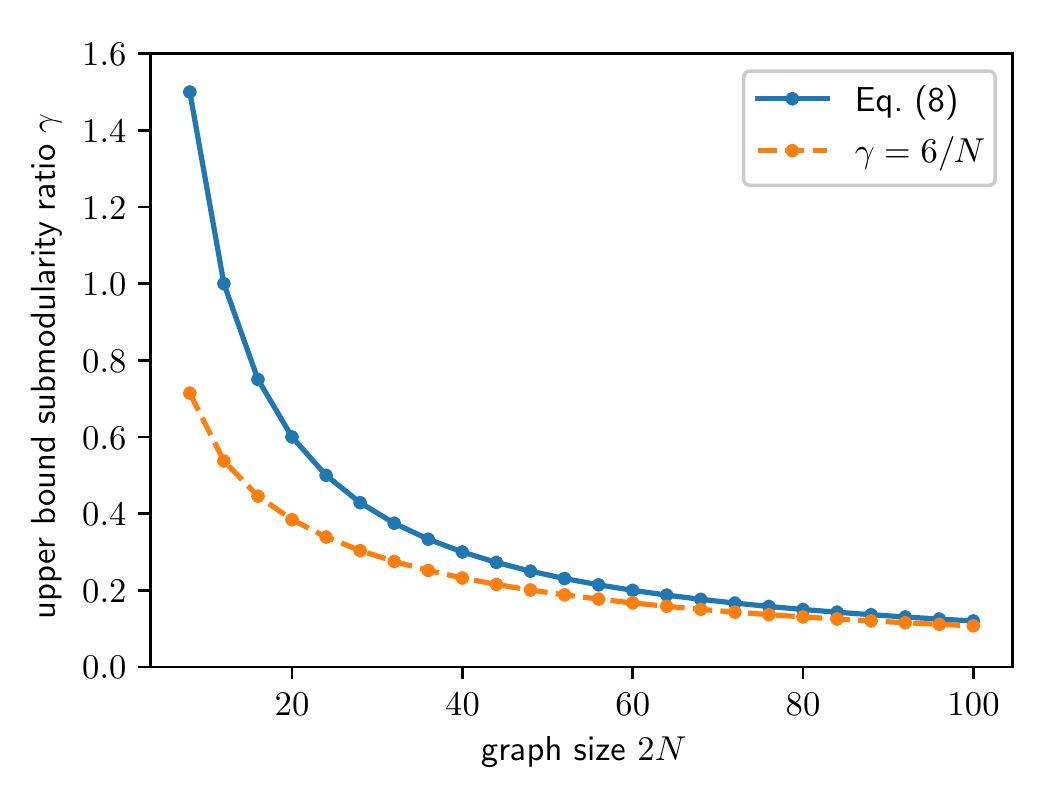}
	\caption{The upper bound for the submodularity ratio~$\gamma$ according to Eq.\ \eqref{eq_counterexample_gamma} and its asymptote $6/N$ for various values of $N$.}
	\label{fig_path_graph_performance}
\end{figure}

\section{The quality of the greedy algorithm}\label{sec_greedy}
The counterexample from Theorem~\ref{thm_counterexample2} demonstrates that the quality of the greedy solution cannot be guaranteed. However, \emph{this does not necessarily imply that the greedy algorithm actually performs bad}. Instead, we just cannot guarantee the quality of the greedy algorithm.

In this section, we measure the performance of the greedy algorithm. We define the efficiency $\eta$ as
\begin{equation*}
	\eta = \frac{(R_G)_{\textnormal{opt}}}{(R_G)_{\textnormal{greedy}}}
\end{equation*}
If the efficiency $\eta=1$, the greedy algorithm provides the optimal solution. If the efficiency $\eta < 1$, the greedy algorithm produces a sub-optimal solution. 

We determine the accuracy of the greedy algorithm by looking at small graphs. We consider $2 \leq k \leq 6$ links to be added by the greedy algorithm, which we then compare with the optimal value, which we obtain by brute-force. For $k=1$, the optimal and the greedy algorithm coincide. We generate all non-isomorphic, connected graphs using Nauty and Traces \cite{nautytraces}. Out of all non-isomorphic connected graphs, we compute the smallest efficiency $\eta_{\min}$ for $5 \leq N \leq 10$ and $2 \leq k \leq 6$, which are shown in Table~\ref{table_greedy2}. The computational complexity equals $\O(k L^c N^3)$ for the greedy algorithm and $\O((L^c)^k N^3)$ for the brute-force algorithm, implying that the procedure can only be executed for small graphs. We verified all graphs with $N \leq 8$ nodes and randomly selected several graphs for $N=9, N=10$. The worst case, corresponding to the graph shown in Fig.\ \ref{fig_worst_example} with $N=10$ nodes and $k=3$ added links, has efficiency $\eta=\mathbf{0.878}$, implying that the greedy algorithm produces a graph whose effective graph resistance is a factor 0.878 different from the optimal effective graph resistance. Already for a small graph with $N=10$ nodes, the efficiency $\eta$ is quite low. We expect that larger graphs may have even lower efficiencies, but the computational complexity is too demanding to demonstrate that here.


\begin{table}[H]
	\centering
	\caption{The lowest efficiency $\eta_{\min}$ for each possible combination of the number of nodes $N$ and the number of added links $k$. The values for $N=9,10$ are lower bounds, as the number of non-isomorphic graphs is too large to compute all; instead, we randomly selected several graphs. The lowest efficiency $\eta_{\min}$ is highlighted in \textbf{bold}.}
	\begin{tabular}{c|c|c|c|c|c|c}
		\multicolumn{2}{c}{} \\
		& $N=5$ & $N=6$ & $N=7$ & $N=8$ & $N=9$ & $N=10$ \\
		$k$ & 21 graphs & 112 graphs & 853 graphs & 11,117 graphs & 261,080 graphs & 11,716,571 graphs \\
		\hline
		2 & 0.937 & 0.946 & 0.934 & 0.922 & $0.891$ & $\leq 0.905$ \\
		3 & 1 & 0.957 & 0.940 & 0.931 & $0.913$ & $\mathbf{\leq 0.878}$ \\
		4 & 1 & 0.957 & 0.949 & 0.936 & $0.899$ & $\leq 0.918$ \\
		5 & 1 & 0.965 & 0.966 & 0.949 & $\leq 0.919$ & $\leq 0.918$\\
		6 & 1 & 0.970 & 0.966 & 0.950 &  $\leq 0.941$ & $\leq 0.932$
	\end{tabular}
	\label{table_greedy2}
\end{table}

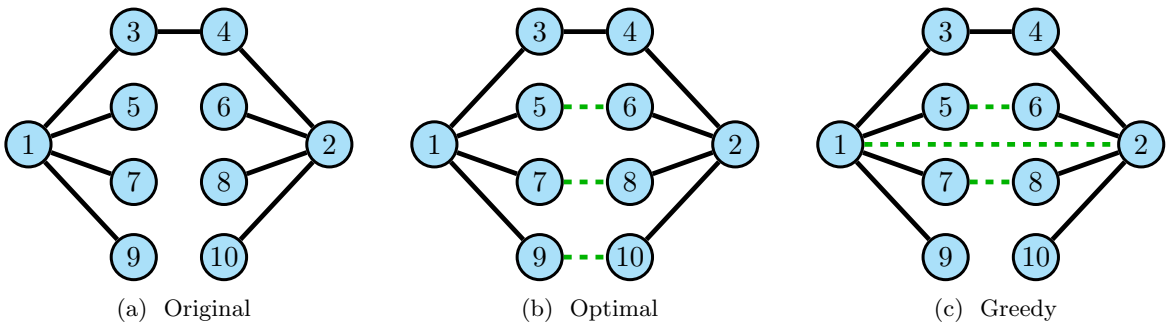
\begin{figure}[!ht]
	\centering
	\subfloat[\label{fig_worst1} Original]{%
		\begin{tikzpicture}
			\node[mystyle, label=center:1] (N1) at (-2,-1.5) {};
			\node[mystyle, label=center:2] (N2) at (2,-1.5) {};
			\node[mystyle, label=center:3] (N3) at (-0.6,0) {};
			\node[mystyle, label=center:4] (N4) at (0.6,0) {};
			\node[mystyle, label=center:5] (N5) at (-0.6, -1) {};
			\node[mystyle, label=center:6] (N6) at (0.6, -1) {};
			\node[mystyle, label=center:7] (N7) at (-0.6, -2) {};
			\node[mystyle, label=center:8] (N8) at (0.6, -2) {};
			\node[mystyle, label=center:9] (N9) at (-0.6,-3) {};
			\node[mystyle, label=center:10] (N10) at (0.6,-3) {};
			
			\path [-,line width=1.8pt] (N1) edge node[right] {} (N3);
			\path [-,line width=1.8pt] (N1) edge node[left] {} (N5);
			\path [-,line width=1.8pt] (N1) edge node[left] {} (N7);
			\path [-,line width=1.8pt] (N1) edge node[left] {} (N9);
			\path [-,line width=1.8pt] (N2) edge node[left] {} (N4);
			\path [-,line width=1.8pt] (N2) edge node[left] {} (N6);
			\path [-,line width=1.8pt] (N2) edge node[left] {} (N8);
			\path [-,line width=1.8pt] (N2) edge node[left] {} (N10);
			\path [-,line width=1.8pt] (N3) edge node[right] {} (N4);
			
			
			
		\end{tikzpicture}
	}\hspace{5mm} \subfloat[\label{fig_worst2} Optimal]{%
		\begin{tikzpicture}
			\node[mystyle, label=center:1] (N1) at (-2,-1.5) {};
			\node[mystyle, label=center:2] (N2) at (2,-1.5) {};
			\node[mystyle, label=center:3] (N3) at (-0.6,0) {};
			\node[mystyle, label=center:4] (N4) at (0.6,0) {};
			\node[mystyle, label=center:5] (N5) at (-0.6, -1) {};
			\node[mystyle, label=center:6] (N6) at (0.6, -1) {};
			\node[mystyle, label=center:7] (N7) at (-0.6, -2) {};
			\node[mystyle, label=center:8] (N8) at (0.6, -2) {};
			\node[mystyle, label=center:9] (N9) at (-0.6,-3) {};
			\node[mystyle, label=center:10] (N10) at (0.6,-3) {};
			
			\path [-,line width=1.8pt] (N1) edge node[right] {} (N3);
			\path [-,line width=1.8pt] (N1) edge node[left] {} (N5);
			\path [-,line width=1.8pt] (N1) edge node[left] {} (N7);
			\path [-,line width=1.8pt] (N1) edge node[left] {} (N9);
			\path [-,line width=1.8pt] (N2) edge node[left] {} (N4);
			\path [-,line width=1.8pt] (N2) edge node[left] {} (N6);
			\path [-,line width=1.8pt] (N2) edge node[left] {} (N8);
			\path [-,line width=1.8pt] (N2) edge node[left] {} (N10);
			\path [-,line width=1.8pt] (N3) edge node[right] {} (N4);
			
			\path [dashed,line width=1.8pt,color=black!30!green] (N5) edge node[left] {} (N6);
			\path [dashed,line width=1.8pt,color=black!30!green] (N7) edge node[left] {} (N8);
			\path [dashed,line width=1.8pt,color=black!30!green] (N9) edge node[left] {} (N10);
			
			
		\end{tikzpicture}
	}\hspace{5mm} \subfloat[\label{fig_worst3} Greedy]{%
		\begin{tikzpicture}
			\node[mystyle, label=center:1] (N1) at (-2,-1.5) {};
			\node[mystyle, label=center:2] (N2) at (2,-1.5) {};
			\node[mystyle, label=center:3] (N3) at (-0.6,0) {};
			\node[mystyle, label=center:4] (N4) at (0.6,0) {};
			\node[mystyle, label=center:5] (N5) at (-0.6, -1) {};
			\node[mystyle, label=center:6] (N6) at (0.6, -1) {};
			\node[mystyle, label=center:7] (N7) at (-0.6, -2) {};
			\node[mystyle, label=center:8] (N8) at (0.6, -2) {};
			\node[mystyle, label=center:9] (N9) at (-0.6,-3) {};
			\node[mystyle, label=center:10] (N10) at (0.6,-3) {};
			
			\path [-,line width=1.8pt] (N1) edge node[right] {} (N3);
			\path [-,line width=1.8pt] (N1) edge node[left] {} (N5);
			\path [-,line width=1.8pt] (N1) edge node[left] {} (N7);
			\path [-,line width=1.8pt] (N1) edge node[left] {} (N9);
			\path [-,line width=1.8pt] (N2) edge node[left] {} (N4);
			\path [-,line width=1.8pt] (N2) edge node[left] {} (N6);
			\path [-,line width=1.8pt] (N2) edge node[left] {} (N8);
			\path [-,line width=1.8pt] (N2) edge node[left] {} (N10);
			\path [-,line width=1.8pt] (N3) edge node[right] {} (N4);
			
			
			\path [dashed,line width=1.8pt,color=black!30!green] (N5) edge node[left] {} (N6);
			\path [dashed,line width=1.8pt,color=black!30!green] (N1) edge node[left] {} (N2);
			\path [dashed,line width=1.8pt,color=black!30!green] (N7) edge node[left] {} (N8);
			
		\end{tikzpicture}
	}
	\caption{The graph $G$ with the currently known smallest efficiency $\eta_{\min}=0.878$ on $N=10$ nodes and the $k=3$ added links are shown as dashed green links.}
	\label{fig_worst_example}
\end{figure}

\section{Conclusion}
\label{sec_conclusion}
We consider the optimisation problem of adding $k$ links to a given network, such that the resulting normalized effective graph resistance is as large as possible. We have shown that this problem is not generalised submodular, by providing an example for which the submodularity ratio converges to zero. As a consequence, we have no guaranteed solution quality of the greedy algorithm. We additionally investigate the efficiency of the greedy algorithm for optimizing the effective graph resistance for small graphs. Already for graphs with 10 nodes, the efficiency can be as low as $0.878$. On the agenda for future research is finding an example where the performance of the greedy algorithm is even lower. One possible method is to consider a class of graphs which, for a given number of nodes $N$ and number of links $L$, have optimal effective graph resistance. Then we start with another graph with $N$ nodes and a subset of the links in the first graph. By sequentially adding links using the greedy algorithm, the graph may converge to the optimum, or if otherwise, we may find explicit examples for the efficiency smaller than one in large graphs. Alternatively, one can derive alternative lower bounds on the solution quality of the greedy algorithm.

%

\clearpage

\bibliographystyle{unsrt}

\appendix

\section{Proof of Theorem \ref{thm_counterexample}}
\label{app_proof}
Prior to our proof, we first present two powerful theorems.
\begin{theorem}[Theorem 2.1 from \cite{yang2013graphresistance}]\label{thm_link_weight}
	Let $\omega$ and $\omega'$ be resistance distance functions for connected graphs $G$ and $G'$ which are the same except for the weights $w$ and $w'$ on an link with nodes $i$ and $j$. Introduce $\delta = w'-w$. Then for any nodes $p \neq q$ it holds that
	\begin{equation}\label{eq_omega_diff}
		\omega'_{pq} = \omega_{pq} - \frac{\delta \cdot \left[ \omega_{p,i} + \omega_{q,j} - \omega_{p,j} - \omega_{q,i} \right]^2}{4[1+\delta \omega_{ij}]}
	\end{equation}
\end{theorem}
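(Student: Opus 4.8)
The plan is to prove the perturbation formula \eqref{eq_omega_diff} by encoding the resistance distance through the Moore--Penrose pseudoinverse of the weighted Laplacian and treating the single edge-weight change as a rank-one update, to which the Sherman--Morrison identity applies. Write $L$ and $L'$ for the (weighted) Laplacians of $G$ and $G'$. Since the graphs differ only in the weight of the edge $\{i,j\}$, with $\delta = w' - w$, we have the rank-one relation $L' = L + \delta\, b\, b^{T}$ where $b = e_i - e_j$ and $e_i$ is the $i$-th standard basis vector. The starting point is the standard identity $\omega_{pq} = (e_p - e_q)^{T} L^{+} (e_p - e_q)$, valid for connected graphs, and its analogue for $\omega'_{pq}$ with $(L')^{+}$.

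First I would dispose of the singularity of the Laplacian. Both $L$ and $L'$ annihilate the all-ones vector $\mathbf{1}$, and because $G$ and $G'$ are connected their null spaces are \emph{exactly} $\mathrm{span}(\mathbf{1})$. Restricting to the invariant subspace $\mathbf{1}^{\perp}$, both operators become invertible, and their pseudoinverses agree on $\mathbf{1}^{\perp}$ with the genuine inverses of the restrictions. Crucially, every vector entering the formula --- namely $e_p - e_q$ and $b = e_i - e_j$ --- has entries summing to zero and therefore lies in $\mathbf{1}^{\perp}$. Hence all quadratic forms below may be computed on $\mathbf{1}^{\perp}$, where the ordinary Sherman--Morrison identity is valid with no correction terms.

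Applying Sherman--Morrison to the restricted operators yields
\begin{equation*}
	(L')^{+} = L^{+} - \frac{\delta\, L^{+} b\, b^{T} L^{+}}{1 + \delta\, b^{T} L^{+} b}
\end{equation*}
as an identity on $\mathbf{1}^{\perp}$. Sandwiching between $(e_p - e_q)$ gives
\begin{equation*}
	\omega'_{pq} = \omega_{pq} - \frac{\delta \left[ (e_p - e_q)^{T} L^{+} b \right]^{2}}{1 + \delta\, b^{T} L^{+} b}.
\end{equation*}
It then remains to identify the two scalars. The denominator is immediate: $b^{T} L^{+} b = (e_i - e_j)^{T} L^{+} (e_i - e_j) = \omega_{ij}$. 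For the numerator I would invoke the elementary identity $L^{+}_{ab} = \tfrac{1}{2}(L^{+}_{aa} + L^{+}_{bb} - \omega_{ab})$, expand $(e_p - e_q)^{T} L^{+} (e_i - e_j) = L^{+}_{pi} - L^{+}_{pj} - L^{+}_{qi} + L^{+}_{qj}$, and observe that the four diagonal contributions cancel in pairs, leaving $\tfrac{1}{2}(\omega_{pj} + \omega_{qi} - \omega_{pi} - \omega_{qj})$. Squaring removes the sign ambiguity, so $\left[ (e_p - e_q)^{T} L^{+} b \right]^{2} = \tfrac{1}{4}\left[ \omega_{pi} + \omega_{qj} - \omega_{pj} - \omega_{qi} \right]^{2}$, and substituting both quantities reproduces \eqref{eq_omega_diff} exactly.

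The main obstacle is the bookkeeping around the singular Laplacian: one must justify that passing to $\mathbf{1}^{\perp}$ legitimately replaces the pseudoinverse by an ordinary inverse and that the perturbation vector $b$ remains inside this subspace, so that the classical Sherman--Morrison formula --- rather than its more delicate pseudoinverse analogue --- can be used. The connectivity hypothesis on both $G$ and $G'$ is precisely what guarantees a common, one-dimensional null space and thereby makes the reduction clean; the remaining algebra, including the cancellation of the diagonal entries of $L^{+}$, is routine. As a consistency check, specialising to $p=i,\,q=j$ collapses the bracket to $-2\omega_{ij}$ and reduces \eqref{eq_omega_diff} to $\omega'_{ij} = \omega_{ij}/(1+\delta\omega_{ij})$, which is exactly the parallel-resistance law for adding a conductance $\delta$ across $\{i,j\}$.
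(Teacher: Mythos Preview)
Your proof is correct, but there is nothing in the paper to compare it with: the statement is quoted verbatim as Theorem~2.1 of Yang and Klein~\cite{yang2013graphresistance} and invoked as a black box in Appendix~\ref{app_proof}; the authors supply no argument of their own. Your Sherman--Morrison route via the restriction of $L$ and $L'$ to $\mathbf{1}^{\perp}$ is the standard modern derivation. The identification $b^{T}L^{+}b=\omega_{ij}$ and the cancellation of the diagonal terms $L^{+}_{pp},L^{+}_{qq},L^{+}_{ii},L^{+}_{jj}$ in the cross term are both handled cleanly, and the connectivity assumption on $G'$ is precisely what ensures $1+\delta\,\omega_{ij}\neq 0$, so the Sherman--Morrison denominator does not vanish.
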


\begin{theorem}[Theorem 4.1 from \cite{yang2013graphresistance}]
	Let $\omega$ and $\omega'$ be resistance distance functions for connected, weighted graphs $G$ and $G'$ on $\tilde{N}$ nodes which are the same, except for the weights $w$ and $w'$ on an link with node $i$ and $j$. Introduce $\delta=w'-w$. Then
	\begin{equation}\label{eq_RG_diff}
		R_G(G)-R_G(G') = \frac{\delta \tilde{N}\sum_{k=1}^{\tilde{N}} ( \omega_{ik} - \omega_{jk})^2 - \delta \left[ \sum_{k=1}^{\tilde{N}} \omega_{ik} - \sum_{k=1}^{\tilde{N}} \omega_{jk} \right]^2}{4(1+\delta \omega_{ij})}
	\end{equation}
	\label{thm_add_link}
\end{theorem}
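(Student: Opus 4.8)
The plan is to derive Theorem~\ref{thm_add_link} by summing the pointwise resistance-distance formula of Theorem~\ref{thm_link_weight} over all node pairs. Starting from the definition~\eqref{eq_graph_res_def} applied to both $G$ and $G'$, the change in effective graph resistance is
\begin{equation*}
	R_G(G) - R_G(G') = \sum_{1 \leq p < q \leq \tilde{N}} \big( \omega_{pq} - \omega'_{pq} \big).
\end{equation*}
Substituting Eq.~\eqref{eq_omega_diff} into each term gives $\omega_{pq}-\omega'_{pq} = \frac{\delta \,[\omega_{pi}+\omega_{qj}-\omega_{pj}-\omega_{qi}]^2}{4(1+\delta\omega_{ij})}$. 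Since the denominator $4(1+\delta\omega_{ij})$ is the same for every pair $(p,q)$ — it depends only on the perturbed edge $(i,j)$ in the fixed graph $G$ — it factors out of the sum, and what remains is to evaluate $\sum_{p<q}[\omega_{pi}+\omega_{qj}-\omega_{pj}-\omega_{qi}]^2$.

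The key simplification is to introduce, for each node $k$, the quantity $a_k = \omega_{ik}-\omega_{jk}$, using the symmetry $\omega_{pi}=\omega_{ip}$ of resistance distance. Grouping the bracket then yields $\omega_{pi}+\omega_{qj}-\omega_{pj}-\omega_{qi} = (\omega_{ip}-\omega_{jp}) - (\omega_{iq}-\omega_{jq}) = a_p - a_q$, so the outstanding sum collapses to $\sum_{p<q}(a_p-a_q)^2$.

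I would then apply the standard identity for the sum of squared pairwise differences,
\begin{equation*}
	\sum_{1 \leq p < q \leq \tilde{N}} (a_p - a_q)^2 = \tilde{N}\sum_{k=1}^{\tilde{N}} a_k^2 - \Big( \sum_{k=1}^{\tilde{N}} a_k \Big)^2,
\end{equation*}
which follows by writing the left-hand side as $\tfrac{1}{2}\sum_{p,q}(a_p-a_q)^2$, expanding the square, and using $\sum_{p,q}a_p^2 = \tilde{N}\sum_k a_k^2$ together with $\sum_{p,q}a_p a_q = (\sum_k a_k)^2$. Substituting back $a_k = \omega_{ik}-\omega_{jk}$ converts the two terms into $\tilde{N}\sum_k (\omega_{ik}-\omega_{jk})^2$ and $\big(\sum_k\omega_{ik}-\sum_k\omega_{jk}\big)^2$; reinstating the factored constant $\frac{\delta}{4(1+\delta\omega_{ij})}$ reproduces exactly Eq.~\eqref{eq_RG_diff}.

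I expect the only genuine obstacle to be bookkeeping in the pairwise-difference identity: one must confirm that the index $k$ ranges over all $\tilde{N}$ nodes, including $i$ and $j$ themselves (where $\omega_{ii}=\omega_{jj}=0$ and $\omega_{ij}=\omega_{ji}$), so that the termwise substitution $a_k=\omega_{ik}-\omega_{jk}$ is valid without exceptions. Once that indexing is pinned down, the remainder is a direct consequence of Theorem~\ref{thm_link_weight} and elementary algebra.
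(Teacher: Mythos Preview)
Your derivation is correct: summing Eq.~\eqref{eq_omega_diff} over all pairs, rewriting the bracket as $a_p-a_q$ with $a_k=\omega_{ik}-\omega_{jk}$, and invoking the standard identity $\sum_{p<q}(a_p-a_q)^2=\tilde N\sum_k a_k^2-(\sum_k a_k)^2$ does reproduce Eq.~\eqref{eq_RG_diff} exactly, and the indexing concern you flag (that $k$ runs over all $\tilde N$ nodes including $i$ and $j$) is harmless since $a_i=-\omega_{ij}$ and $a_j=\omega_{ij}$ are perfectly well-defined.

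There is, however, nothing to compare against: the paper does not prove this statement at all. It is quoted verbatim as Theorem~4.1 of Yang and Klein~\cite{yang2013graphresistance} and used as a black box in Appendix~\ref{app_proof}. Your argument is essentially the same reduction that Yang and Klein carry out in their paper, so you have supplied the proof that the present authors chose to cite rather than reproduce.
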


\subsection{The graphs $G$ and $G \cup \{v\}$}
The graph $G$ is a graph on $2N$ nodes. The graph can be constructed from a complete bipartite graph $K_{2,N-2}$ as follows. Select two nodes from the group with $N-2$ nodes and attach to each of these nodes a path graph of length $N/2$. Now denote the two nodes that belong to the group consisting of two nodes only, as node $i$ and $j$. The element $v$ is the link between node $i$ and $j$. For a visualisation, see Fig.\ \ref{fig_set_S}.

\begin{figure}[H]
	\centering
	\begin{tikzpicture}
		\node[mystyle, label=center:i] (0) at (0,1.5) {};
		\node[mystyle, label=center:j] (1) at (1.5,1.5) {};
		\node[mystyle, label=center:] (2) at (1.8,0) {};
		\node[mystyle, label=center:] (3) at (-0.3,0) {};
		\node[mystyle, label=center:] (5) at (-1.5,0) {};
		\node[mystyle, label=center:] (8) at (-3,0) {};
		\node[mystyle, label=center:] (9) at (-4.5,0) {};
		\node[mystyle, label=center:] (6) at (3,0) {};
		\node[mystyle, label=center:] (7) at (4.5,0) {};
		\node[mystyle, label=center:] (10) at (6,0) {};
		\node[mystyle, label=center:] (11) at (0.3,-1) {};
		\node[mystyle, label=center:] (12) at (1.2,-1) {};
		
		\path [dashed,line width=1.8pt,color=black!30!green] (0) edge node[left] {} (1);
		\path [-,line width=1.8pt] (0) edge node[left] {} (2);
		\path [-,line width=1.8pt] (0) edge node[left] {} (3);
		\path [-,line width=1.8pt] (1) edge node[left] {} (2);
		\path [-,line width=1.8pt] (1) edge node[right] {} (3);
		\path [-,line width=1.8pt] (3) edge node[left] {} (5);
		\path [-,line width=1.8pt] (2) edge node[left] {} (6);
		\path [-,line width=1.8pt] (6) edge node[left] {} (7);
		\path [-,line width=1.8pt] (5) edge node[left] {} (8);
		\path [-,line width=1.8pt] (8) edge node[left] {} (9);
		\path [-,line width=1.8pt] (7) edge node[left] {} (10);
		\path [-,line width=1.8pt] (0) edge node[left] {} (11);
		\path [-,line width=1.8pt] (0) edge node[left] {} (12);
		\path [-,line width=1.8pt] (1) edge node[left] {} (11);
		\path [-,line width=1.8pt] (1) edge node[left] {} (12);
	\end{tikzpicture}
	\caption{The original graph $G$. The element $v$ is the dashed green link between node $i$ and $j$.}
	\label{fig_set_S}
\end{figure}
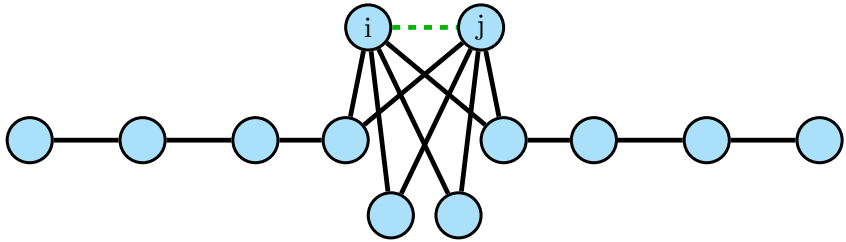
We directly apply Theorem~\ref{thm_add_link} with $\delta=1$ and $\tilde{N}=2N$, where the graph $G$ is considered the \quotes{original graph} and graph $G'$ is graph $G$ augmented with $v$. Then we find
\begin{equation}\label{eq_hfksyfisyf}
	R_G(G) - R_G(G \cup \{ v \})   = \frac{ 2N \sum_{k=1}^{2N} ( \omega_{ik} - \omega_{jk})^2 - \left[ \sum_{k=1}^{2N} \omega_{ik} - \sum_{k=1}^{2N} \omega_{jk} \right]^2}{4(1+ \omega_{ij})}
\end{equation}
Due to symmetry, we can immediately conclude that the second term is zero. Moreover, $\omega_{ik} = \omega_{jk}$ for all $k \leq i$ and $k \neq j$, again due to symmetry. Eq.\ (\ref{eq_hfksyfisyf}) then simplifies to
\begin{equation}
	R_G(G) - R_G(G \cup \{ v \}) = \frac{ N \omega_{ij}^2}{1+ \omega_{ij}}
\end{equation}
Since $\omega_{ij}$ is not affected by the path graphs, the question of finding $\omega_{ij}$ simplifies to finding the effective resistance between two nodes in the complete bipartite graph $K_{2,N-2}$. Using the theory of parallel and series resistors, we find that
\begin{equation*}
	\omega_{ij} = \frac{2}{N-2}
\end{equation*}
and conclude that
\begin{equation}\label{eq_deltaS}
	R_G(G) - R_G(G \cup \{ v \}) = \frac{4}{N-2}
\end{equation}

\subsection{The graphs $G \cup R$ and $G \cup R \cup \{v\}$}
The set $R$ consists of two links. From node $i$, we add one link to the end of a path graph, i,e, the left-most node of graph $G$ and perform the same procedure to node $j$. The resulting graph $G \cup R$ is shown in Fig.\ \ref{fig_set_R}.

\begin{figure}[H]
	\centering
	\begin{tikzpicture}
		\node[mystyle, label=center:i] (0) at (0,1.5) {};
		\node[mystyle, label=center:j] (1) at (1.5,1.5) {};
		\node[mystyle, label=center:m] (2) at (1.8,0) {};
		\node[mystyle, label=center:l] (3) at (-0.3,0) {};
		\node[mystyle, label=center:{\scriptsize N/2}] (5) at (-1.5,0) {};
		\node[mystyle, label=center:k] (8) at (-3,0) {};
		\node[mystyle, label=center:1] (9) at (-4.5,0) {};
		\node[mystyle, label=center:] (6) at (3,0) {};
		\node[mystyle, label=center:] (7) at (4.5,0) {};
		\node[mystyle, label=center:] (10) at (6,0) {};
		\node[mystyle, label=center:] (11) at (0.3,-1) {};
		\node[mystyle, label=center:] (12) at (1.2,-1) {};
		
		\path [-,line width=1.8pt] (0) edge node[left] {} (2);
		\path [-,line width=1.8pt] (0) edge node[left] {} (3);
		\path [-,line width=1.8pt] (1) edge node[left] {} (2);
		\path [-,line width=1.8pt] (1) edge node[right] {} (3);
		\path [-,line width=1.8pt] (3) edge node[left] {} (5);
		\path [-,line width=1.8pt] (2) edge node[left] {} (6);
		\path [-,line width=1.8pt] (6) edge node[left] {} (7);
		\path [-,dotted,line width=1.8pt] (5) edge node[left] {} (8);
		\path [-,dotted,line width=1.8pt] (8) edge node[left] {} (9);
		\path [-,line width=1.8pt] (7) edge node[left] {} (10);
		\path [-,line width=1.8pt] (0) edge node[left] {} (11);
		\path [-,line width=1.8pt] (0) edge node[left] {} (12);
		\path [-,line width=1.8pt] (1) edge node[left] {} (11);
		\path [-,line width=1.8pt] (1) edge node[left] {} (12);
		
		\path [densely dotted,line width=1.8pt,color=red] (1) edge node[left] {} (10);
		\path [densely dotted,line width=1.8pt,color=red] (0) edge node[left] {} (9);
	\end{tikzpicture}
	\caption{The augmented graph $G \cup R$, which contains the original graph $G$ and the set $R$, denoted by dotted red links.}
	\label{fig_set_R}
\end{figure}
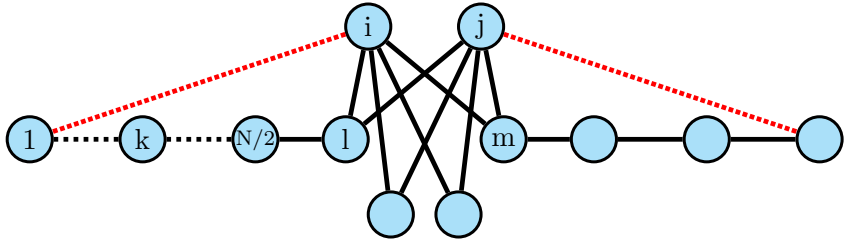

We now return to Theorem~\ref{thm_add_link} with $\delta=1$ and $\tilde{N}=2N$, where the graph $G \cup R$ is considered the \quotes{original graph}. Then we find
\begin{equation}
	R_G(G \cup R) - R_G(G \cup R \cup \{ v \})  = \frac{ 2N \sum_{k=1}^{2N} ( \omega_{ik} - \omega_{jk})^2 - \left[ \sum_{k=1}^{2N} \omega_{ik} - \sum_{k=1}^{2N} \omega_{jk} \right]^2}{4(1+ \omega_{ij})}
\end{equation}
The second term is again zero due symmetry. Then the equation simplifies to
\begin{equation}\label{eq_R_delta}
	\Delta R_G = \frac{2N}{4(1+\omega_{ij})} \sum_{k=1}^{2N} (\omega_{ik}-\omega_{jk})^2
\end{equation}
In order to determine $\Delta R_G$ we need to find expressions for the effective resistance $\omega_{ij}, \omega_{ik}$ and $\omega_{jk}$ for the graph $G \cup R$. Note that the graph $R$ is planar, as can be seen from Fig.\ \ref{Rplanar}, which shows the graph $G \cup R$ drawn without intersecting links.

\begin{figure}[H]
	\centering
	\begin{tikzpicture}
		\node[mystyle, label=center:i] (0) at (-0.3,0) {};
		\node[mystyle, label=center:j] (1) at (1.8,0) {};
		\node[mystyle, label=center:m] (2) at (0.75,-1.5) {};
		\node[mystyle, label=center:l] (3) at (0.75,1.5) {};
		\node[mystyle, label=center:{\scriptsize N/2}] (5) at (-4.5,0) {};
		\node[mystyle, label=center:k] (8) at (-3,0) {};
		\node[mystyle, label=center:1] (9) at (-1.5,0) {};
		\node[mystyle, label=center:] (6) at (6,0) {};
		\node[mystyle, label=center:] (7) at (4.5,0) {};
		\node[mystyle, label=center:] (10) at (3,0) {};
		\node[mystyle, label=center:] (11) at (0.75,0.5) {};
		\node[mystyle, label=center:] (12) at (0.75,-0.5) {};
		
		\path [-,line width=1.8pt] (0) edge node[left] {} (2);
		\path [-,line width=1.8pt] (0) edge node[left] {} (3);
		\path [-,line width=1.8pt] (1) edge node[left] {} (2);
		\path [-,line width=1.8pt] (1) edge node[right] {} (3);
		\path [-,line width=1.8pt] (3) edge node[left] {} (5);
		\path [-,line width=1.8pt] (2) edge node[left] {} (6);
		\path [-,line width=1.8pt] (6) edge node[left] {} (7);
		\path [-,dotted,line width=1.8pt] (5) edge node[left] {} (8);
		\path [-,dotted,line width=1.8pt] (8) edge node[left] {} (9);
		\path [-,line width=1.8pt] (7) edge node[left] {} (10);
		\path [-,line width=1.8pt] (0) edge node[left] {} (11);
		\path [-,line width=1.8pt] (0) edge node[left] {} (12);
		\path [-,line width=1.8pt] (1) edge node[left] {} (11);
		\path [-,line width=1.8pt] (1) edge node[left] {} (12);
		
		\path [densely dotted,line width=1.8pt,color=red] (1) edge node[left] {} (10);
		\path [densely dotted,line width=1.8pt,color=red] (0) edge node[left] {} (9);
	\end{tikzpicture}
	\caption{The graph $G \cup R$ redrawn as a planar graph.}
	\label{Rplanar}
\end{figure}
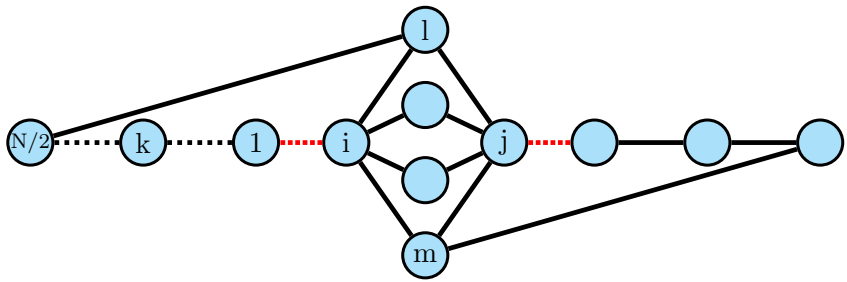

Because $G \cup R$ is planar, according to \cite{planar}, it is possible to determine the effective resistance for each node pair by a sequence of series, parallel, $Y-\Delta$, and $\Delta-Y$ transformations.

To analyse Eq.\ \eqref{eq_R_delta} it is convenient to consider the following two subsets of nodes in the graph $G \cup R$. First, denote the nodes in $K_{2,N-2}$ attached to the two path graphs as nodes $l$ and $m$, see Fig.\ \ref{fig_set_R}. Then we define the node set $A$ as the nodes in $K_{2,N-2}$ minus the nodes $i$, $j$, $l$ and $m$. Hence, $A$ consists of $N - 4$ nodes. Next, the subset of nodes $B$ is formed by the nodes in the two path graphs plus the nodes $l$ and $m$. Therefore, set $B$ contains $N + 2$ nodes. We can now split up the sum in Eq.\ \eqref{eq_R_delta} into three parts; (i) terms containing only $\omega_{ij}$ (2x), (ii) terms containing only nodes in set $A$ (iii) terms containing only nodes in set $B$. Then Eq.\ (\ref{eq_R_delta}) becomes
\begin{equation*}
	\Delta R_G = \frac{2N}{4(1+\omega_{ij})} \left( 2 \omega_{ij}^2 + \sum_{k\in A} (\omega_{ik}-\omega_{jk})^2 + \sum_{k\in B} (\omega_{ik}-\omega_{jk})^2 \right)
\end{equation*}
In the complete bipartite graph $K_{2,N-2}$, due to symmetry, it holds that $\omega_{ik} = \omega_{jk}$, so the contribution of set $A$ is zero. Thus, we can simplify $\Delta R_G$ to
\begin{equation}\label{eq_bsfysdifysif}
	\Delta R_G = \frac{2N}{4(1+\omega_{ij})} \left( 2 \omega_{ij}^2 + \sum_{k\in B} (\omega_{ik}-\omega_{jk})^2 \right)
\end{equation}
Hence, we need to compute $\omega_{ij}$ and $\omega_{ik}$ and $\omega_{jk}$ for nodes $k$ in the path graph. The index $k$ runs from $k=1$ to $k=N/2+1$, where the index $k=N/2+1$ corresponds to node $l$, see Fig.\ \ref{fig_set_R}.

As a first simplification step, we apply the resistors in series transformation twice, to the left part of the graph $G \cup R$ in Fig.\ \ref{fig_set_R}. We replace the path $i,1,\cdots,k$ by a single resistor of value $k$ and the path $i,l,N/2,\cdots,k$ by a single resistor of value $(\frac{N}{2} - k + 1)$. The result is depicted in Fig.\ \ref{fig_set_R2}. 

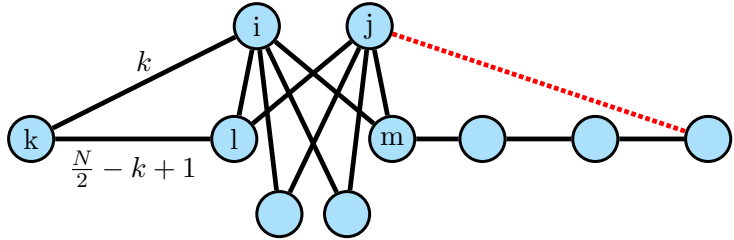
\begin{figure}[H]
	\centering
	\begin{tikzpicture}
		\node[mystyle, label=center:i] (0) at (0,1.5) {};
		\node[mystyle, label=center:j] (1) at (1.5,1.5) {};
		\node[mystyle, label=center:m] (2) at (1.8,0) {};
		\node[mystyle, label=center:l] (3) at (-0.3,0) {};
		\node[mystyle, label=center:k] (8) at (-3,0) {};
		\node[mystyle, label=center:] (6) at (3,0) {};
		\node[mystyle, label=center:] (7) at (4.5,0) {};
		\node[mystyle, label=center:] (10) at (6,0) {};
		\node[mystyle, label=center:] (11) at (0.3,-1) {};
		\node[mystyle, label=center:] (12) at (1.2,-1) {};
		
		\path [-,line width=1.8pt] (0) edge node[left] {} (2);
		\path [-,line width=1.8pt] (0) edge node[left] {} (3);
		\path [-,line width=1.8pt] (1) edge node[left] {} (2);
		\path [-,line width=1.8pt] (1) edge node[right] {} (3);
		\path [-,line width=1.8pt] (2) edge node[left] {} (6);
		\path [-,line width=1.8pt] (6) edge node[left] {} (7);
		\path [-,line width=1.8pt] (7) edge node[left] {} (10);
		\path [-,line width=1.8pt] (0) edge node[left] {} (11);
		\path [-,line width=1.8pt] (0) edge node[left] {} (12);
		\path [-,line width=1.8pt] (1) edge node[left] {} (11);
		\path [-,line width=1.8pt] (1) edge node[left] {} (12);
		\path [-,line width=1.8pt] (8) edge node[above] {$k$} (0);
		\path [-,line width=1.8pt] (8) edge node[below] {$\frac{N}{2}-k+1$} (3);
		
		\path [densely dotted,line width=1.8pt,color=red] (1) edge node[left] {} (10);
	\end{tikzpicture}
	\caption{Result of simplification step 1 on the graph $G \cup R$.}
	\label{fig_set_R2}
\end{figure}
Next we apply the series transformation to the path from node $j$ to $m$ on the right part of the graph, to obtain a link with a resistance of $\frac{N}{2}+1$. This results in two links between nodes $j$ and $m$. Applying the transformation for parallel resistors to these two links, we obtain a single link between nodes $j$ and $m$, with the following resistance:
\begin{equation*}
	r_{jm} = \frac{1}{ 1 + \frac{1}{\frac{N}{2} + 1}} = \frac{N+2}{N+4}
\end{equation*}
The resulting graph is depicted in Fig.\ \ref{fig_set_R3}. 
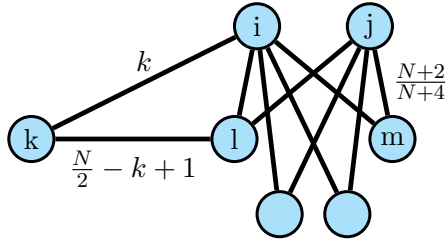
\begin{figure}[H]
	\centering
	\begin{tikzpicture}
		\node[mystyle, label=center:i] (0) at (0,1.5) {};
		\node[mystyle, label=center:j] (1) at (1.5,1.5) {};
		\node[mystyle, label=center:m] (2) at (1.8,0) {};
		\node[mystyle, label=center:l] (3) at (-0.3,0) {};
		\node[mystyle, label=center:k] (8) at (-3,0) {};
		\node[mystyle, label=center:] (11) at (0.3,-1) {};
		\node[mystyle, label=center:] (12) at (1.2,-1) {};
		
		\path [-,line width=1.8pt] (0) edge node[left] {} (2);
		\path [-,line width=1.8pt] (0) edge node[left] {} (3);
		\path [-,line width=1.8pt] (1) edge node[right] {$\frac{N+2}{N+4}$} (2);
		\path [-,line width=1.8pt] (1) edge node[right] {} (3);
		\path [-,line width=1.8pt] (0) edge node[left] {} (11);
		\path [-,line width=1.8pt] (0) edge node[left] {} (12);
		\path [-,line width=1.8pt] (1) edge node[left] {} (11);
		\path [-,line width=1.8pt] (1) edge node[left] {} (12);
		
		\path [-,line width=1.8pt] (8) edge node[above] {$k$} (0);
		\path [-,line width=1.8pt] (8) edge node[below] {$\frac{N}{2}-k+1$} (3);
	\end{tikzpicture}
	\caption{Result of simplification step 2 on the graph $G \cup R$.}
	\label{fig_set_R3}
\end{figure}
In Fig.\ \ref{fig_set_R3} there are $N - 2$ two-hop paths between nodes $i$ and $j$. Excluding the two paths passing through nodes $l$ and $m$, we can use the parallel resistors transformation to transform the remaining $N-4$ two-hop paths to a single link between nodes $i$ and $j$, with resistance $r_{ij}$ given by 
\begin{equation*}
	r_{ij} = \frac{1}{\frac{1}{2} (N-4)} = \frac{2}{N-4}
\end{equation*}
The resulting graph is depicted in Fig.\ \ref{fig_set_R4}
\begin{figure}[H]
	\centering
	\begin{tikzpicture}
		\node[mystyle, label=center:i] (0) at (0,1.5) {};
		\node[mystyle, label=center:j] (1) at (1.5,1.5) {};
		\node[mystyle, label=center:m] (2) at (1.8,0) {};
		\node[mystyle, label=center:l] (3) at (-0.3,0) {};
		\node[mystyle, label=center:k] (8) at (-3,0) {};
		
		\path [-,line width=1.8pt] (0) edge node[below,pos=0.6] {1} (2);
		\path [-,line width=1.8pt] (0) edge node[left] {1} (3);
		\path [-,line width=1.8pt] (1) edge node[right] {$\frac{N+2}{N+4}$} (2);
		\path [-,line width=1.8pt] (1) edge node[above,pos=0.7] {1} (3);
		
		\path [-,line width=1.8pt] (8) edge node[above] {$k$} (0);
		\path [-,line width=1.8pt] (8) edge node[below] {$\frac{N}{2}-k+1$} (3);
		
		\path [-,line width=1.8pt] (0) edge node[above] {$\frac{2}{N-4}$} (1);
	\end{tikzpicture}
	\caption{Result of simplification step 3 on the graph $G \cup R$.}
	\label{fig_set_R4}
\end{figure}
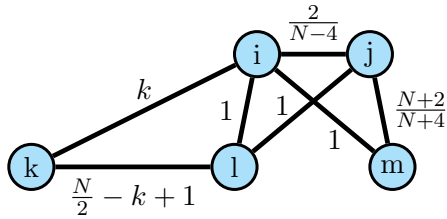
Next, we replace the path $i,m,j$ by a single link of resistance $1+\frac{N+2}{N+4}$. Then, again using the parallel resistors transformation between nodes $i$ and $j$, the resistance $r_{ij}'$ on the link $(i,j)$ becomes
\begin{equation*}
	r_{ij}' = \dfrac{1}{\dfrac{1}{\dfrac{2}{N-4}} + \dfrac{1}{1+\dfrac{N+2}{N+4}}} = \frac{2N+6}{N^2-8}
\end{equation*}
The resulting graph is depicted in Fig.\ \ref{fig_set_R5}.
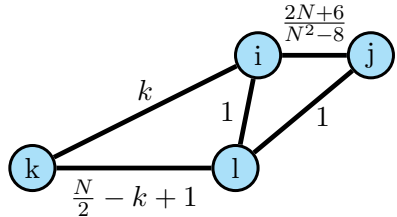
\begin{figure}[H]
	\centering
	\begin{tikzpicture}
		\node[mystyle, label=center:i] (0) at (0,1.5) {};
		\node[mystyle, label=center:j] (1) at (1.5,1.5) {};
		\node[mystyle, label=center:l] (3) at (-0.3,0) {};
		\node[mystyle, label=center:k] (8) at (-3,0) {};
		
		\path [-,line width=1.8pt] (0) edge node[above] {$\frac{2N+6}{N^2-8}$} (1);
		\path [-,line width=1.8pt] (0) edge node[left] {1} (3);
		\path [-,line width=1.8pt] (1) edge node[below,pos=0.3] {1} (3);
		
		\path [-,line width=1.8pt] (8) edge node[above] {$k$} (0);
		\path [-,line width=1.8pt] (8) edge node[below] {$\frac{N}{2}-k+1$} (3);
	\end{tikzpicture}
	\caption{Result of simplification step 4 on the graph $G \cup R$}
	\label{fig_set_R5}
\end{figure}

With the graph shown in Fig.\ \ref{fig_set_R5} we will now compute the resistances $\omega_{ij}, \omega_{ik}$ and $\omega_{jk}$. \\

\textbf{Resistance $\omega_{ij}$ between node $i$ and $j$:} \\
The links $(i,k)$ and $(k,l)$ in series, are parallel to the link $(i,l)$, such that we can replace the three links by a single link with resistance $r_{il}$, given by
\begin{equation*}
	r_{il} = \frac{N+2}{N+4}
\end{equation*}
The resulting graph is depicted in Fig.\ \ref{fig_set_R6a}

\begin{figure}[H]
	\centering
	\begin{tikzpicture}
		\node[mystyle, label=center:i] (0) at (0,1.5) {};
		\node[mystyle, label=center:j] (1) at (1.5,1.5) {};
		\node[mystyle, label=center:l] (3) at (-0.3,0) {};
		
		\path [-,line width=1.8pt] (0) edge node[above] {$\frac{2N+6}{N^2-8}$} (1);
		\path [-,line width=1.8pt] (0) edge node[left] {$\frac{N+2}{N+4}$} (3);
		\path [-,line width=1.8pt] (1) edge node[below,pos=0.3] {1} (3);
		
	\end{tikzpicture}
	\caption{Result of simplification step 5a on the graph $G \cup R$}
	\label{fig_set_R6a}
\end{figure}
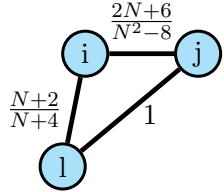

In the graph shown in Fig.\ \ref{fig_set_R6a} the links $(i,l)$ and $(l,j)$ in series, are parallel to the link $(i,j)$, such that we can replace the three links by a single link so we finally obtain the resistance $\omega_{ij}$:

\begin{equation}\label{om_ij}
	\omega_{ij} = \frac{2N+6}{N^2+N-4}
\end{equation}
\phantom{a}\\

\textbf{Resistance $\omega_{ik}$ between node $i$ and $k$:} \\
In Fig.\ \ref{fig_set_R5} the links $(i,j)$ and $(j,l)$ in series, are parallel to the link $(i,l)$, such that we can replace the three links by a single link with resistance $r_{il}'$, given by

\begin{equation*}
	r_{il}' = \frac{N^2+2N-2}{2N^2+2N-10}
\end{equation*}

The resulting graph is shown in Fig.\ \ref{fig_set_R6b}. 

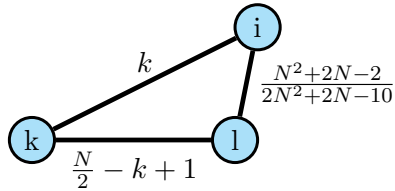
\begin{figure}[H]
	\centering
	\begin{tikzpicture}
		\node[mystyle, label=center:i] (0) at (0,1.5) {};
		\node[mystyle, label=center:l] (3) at (-0.3,0) {};
		\node[mystyle, label=center:k] (8) at (-3,0) {};
		
		\path [-,line width=1.8pt] (0) edge node[right] {$\frac{N^2+2N-2}{2N^2+2N-10}$} (3);
		
		\path [-,line width=1.8pt] (8) edge node[above] {$k$} (0);
		\path [-,line width=1.8pt] (8) edge node[below] {$\frac{N}{2}-k+1$} (3);
	\end{tikzpicture}
	\caption{Result of simplification step 5b on the graph $G \cup R$}
	\label{fig_set_R6b}
\end{figure}

In the graph shown in Fig.\ \ref{fig_set_R6b} the links $(i,l)$ and $(l,k)$ in series, are parallel to the link $(i,k)$, such that we can replace the three links by a single link so we finally obtain the resistance $\omega_{ik}$:

\begin{equation}\label{om_ik}
	\omega_{ik} = \frac{k (-2 k N^2 - 2 k N + 10 k + N^3 + 4 N^2 - N - 12)}{(N + 3) (N^2 + N - 4)}
\end{equation}
\phantom{a} \\

\textbf{Resistance $\omega_{jk}$ between node $j$ and $k$:} \\
For this case, the series and parallel transformations do not work. Instead, we use the $\Delta-Y$ transform on the right-hand side triangle in the graph in Fig.\ \ref{fig_set_R5}.
This comes down to removing the links $(i,j), (i,l)$ and $(j,l)$, adding a new node $A$ to the graph, and adding links $(i,A), (j,A)$ and $(l,A)$, with resistance $\frac{a}{2+a}$, $\frac{a}{2+a}$, and $\frac{1}{2+a}$, respectively, where $a=\frac{2N+6}{N^2-8}$. The resulting graph is depicted in Fig.\ \ref{fig_set_R6c}.

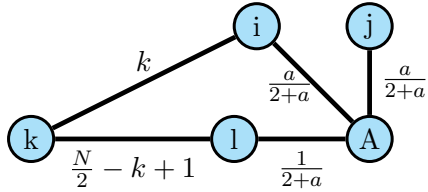
\begin{figure}[H]
	\centering
	\begin{tikzpicture}
		\node[mystyle, label=center:i] (0) at (0,1.5) {};
		\node[mystyle, label=center:j] (1) at (1.5,1.5) {};
		\node[mystyle, label=center:l] (3) at (-0.3,0) {};
		\node[mystyle, label=center:k] (8) at (-3,0) {};
		\node[mystyle, label=center:A] (7) at (1.5,0) {};
		
		
		\path [-,line width=1.8pt] (0) edge node[below,pos=0.2] {$\frac{a}{2+a}$} (7);
		\path [-,line width=1.8pt] (1) edge node[right] {$\frac{a}{2+a}$} (7);
		\path [-,line width=1.8pt] (3) edge node[below] {$\frac{1}{2+a}$} (7);
		
		\path [-,line width=1.8pt] (8) edge node[above] {$k$} (0);
		\path [-,line width=1.8pt] (8) edge node[below] {$\frac{N}{2}-k+1$} (3);
	\end{tikzpicture}
	\caption{Result of simplification step 5c on the graph $G \cup R$.}
	\label{fig_set_R6c}
\end{figure}
Now series and parallel transformations can be applied to obtain
\begin{equation*}
	\omega_{jk} = - \frac{18 + 12N + 2N^2 - 24k - 5Nk + 4N^2 k + N^3 k + 10 k^2 - 2 N k^2 - 2 N^2 k^2}{12 + N - 4 N^2 - N^3}
\end{equation*}
Having established $\omega_{ij}, \omega_{ik}$ and $\omega_{jk}$, we compute
\begin{equation*}
	\omega_{ik}-\omega_{jk} = \frac{2(2k - N - 3)}{N^2 + N - 4}
\end{equation*}
where the index $1 \leq k \leq \frac{N}{2}+1$. Then we find
\begin{equation}\label{eq_diff_om_ik_jk}
	\sum_{k=1}^{N/2+1} \left( \omega_{ik}-\omega_{jk} \right)^2 = \frac{2 (N^3 + 6N^2 + 11N + 6)}{3 (N^2 + N -4)^2}
\end{equation}
Substituting Eqs.\ (\ref{om_ij})--(\ref{eq_diff_om_ik_jk}) into Eq.\ (\ref{eq_bsfysdifysif}) finally gives
\begin{equation}\label{eq_deltaR}
	\Delta R_G = \frac{2N (N^3 + 12 N^2 + 47 N + 60)}{3 (N^4 + 4 N^3 + N^2 - 10 N - 8 )} =  \frac{2 \,(N+3)\,(N+4)\,(N+5)}{3\,(N+1)\,(N+2)\,(N^2+N-4)}
\end{equation}
Then the relative difference for the graph $G$ provided in Eq.\ (\ref{eq_deltaS}) and for the graph $G \cup R$ in Eq.\ (\ref{eq_deltaR}) finally yields
\begin{equation*}
	\gamma \leq  \frac{6\,(N+1)\,(N+2)\,(N^2+N-4)}{(N-2)\, N\, (N+3)\,(N+4)\,(N+5)}
\end{equation*}
For large $N$, we see that $\gamma \sim \frac{6}{N}$, which converges to zero.

\end{document}